\newtheorem{lemma}{Lemma}[section]
\newtheorem{theorem}{Theorem}[section]
\newtheorem{remark}[theorem]{Remark}
\newenvironment{explanation}{\textit{Explanation.}\ \ignorespaces}{\qed}
\author{Nathan Ng}
\affiliation{%
  \institution{University of Massachusetts Amherst}
  \city{Amherst}
  \state{MA}
  \country{USA}
  \country{}
}
\email{kwanhong@cs.umass.edu}
\author{David Irwin}
\affiliation{%
  \institution{University of Massachusetts Amherst}
  \city{Amherst}
  \state{MA}
  \country{USA}
  \country{}
}
\email{irwin@ecs.umass.edu}
\author{Ananthram Swami}
\affiliation{%
  \institution{DEVCOM Army Research Lab}
  \city{Adelphi}
  \state{Maryland}
  \country{USA}
}
\email{ananthram.swami.civ@army.mil}
\author{Don Towsley}
\affiliation{%
  \institution{University of Massachusetts Amherst}
  \city{Amherst}
  \state{MA}
  \country{USA}
  \country{}
}
\email{towsley@cs.umass.edu}
\author{Prashant Shenoy}
\affiliation{%
  \institution{University of Massachusetts Amherst}
  \city{Amherst}
  \state{MA}
  \country{USA}
  \country{}
}
\email{shenoy@cs.umass.edu}
\renewcommand\footnotetextcopyrightpermission[1]{}
\begin{document}
\title[To Offload or Not To Offload: Model-driven Comparison of Edge-native and On-device Processing]{To Offload or Not To Offload: Model-driven Comparison of Edge-native and On-device Processing In the Era of Accelerators}

\begin{abstract}
Computational offloading is a promising approach for overcoming resource constraints on client devices by moving some or all of an application's computations to remote servers. With the advent of specialized hardware accelerators, client devices can now perform fast local processing of specific tasks, such as machine learning inference, reducing the need for offloading computations. However, edge servers with accelerators also offer faster processing for offloaded tasks than was previously possible. 
In this paper, we present an analytic and experimental comparison of on-device processing and edge offloading for a range of accelerator, network, multi-tenant, and application workload scenarios, with the goal of understanding when to use local on-device processing and when to offload computations.
We present models that leverage analytical queuing results to derive explainable closed-form equations for the expected end-to-end latencies of both strategies, which yield precise, quantitative performance crossover predictions that guide adaptive offloading.
We experimentally validate our models across a range of scenarios and show that they achieve a mean absolute percentage error of 2.2\% compared to observed latencies.   We further use our models to develop a resource manager for adaptive offloading and show its effectiveness under variable network conditions and dynamic multi-tenant edge settings.

\end{abstract}

\maketitle

\section{Introduction}
\label{sec:intro}

Over the past decade, cloud computing has become the predominant approach for running online services in domains ranging from finance to entertainment. 
In recent years, a new class of online services has emerged that requires low-latency processing to meet end-user requirements. 
Examples of such applications include autonomous vehicles, interactive augmented and virtual reality (AR/VR), and human-in-the-loop machine learning inference \cite{Mosqueira2022}. 
Edge computing, a complementary approach to cloud computing, is a promising method for meeting the needs of such latency-sensitive applications. 
Conventional wisdom has held that since edge resources are deployed 
at the edge of the network and are 
closer to end-users, edge-native processing can provide lower end-to-end latencies than cloud processing. 
However, recent research \cite{Ali2021,WangB2024} has shown that this conventional wisdom does not always hold---in certain scenarios, edge processing can yield \emph{worse} end-to-end latency despite its network latency advantage over the cloud. 
In particular, since edge clusters are resource-constrained relative to the cloud, research has shown that resource bottlenecks can arise under time-varying workloads and increase the processing latency at the edge, negating its network latency advantage and requiring careful resource management to maintain its benefits \cite{WangB2024}. 

A related trend is the emergence of {\em computational offloading} from local devices to remote resources such as those at the edge.\footnote{Computational offloading to the cloud is also common, but we focus on edge offloading in this paper.} 
Computational offloading \cite{Satya2009,Davis2004} involves using nearby edge resources to take on some, or all, of an application's processing demands.  Such offloading is performed to overcome resource constraints on a device by leveraging the more abundant processing capacity at edge servers. 
Computational offloading is well-established in mobile computing where mobile devices, which are resource- or battery-constrained, offload local computations to edge servers \cite{Ren2019}. 
In this case, local devices pay a so-called mobility penalty \cite{Satyanarayanan2019}, which is the additional network latency to access edge servers, but then benefit from faster edge processing that outweighs this latency overhead. 
Conventional wisdom holds that remote edge processing is faster than local processing despite the network overhead of accessing remote resources due to significant resource constraints at local devices. 

With the advent of hardware accelerators in recent years, the conventional wisdom about the benefits of computational edge offloading for local devices needs to be rethought. 
For example, the rise of the so-called AI PC with onboard AI accelerators~\cite{intel_ai_pc} has enabled local AI processing for many applications without the need for edge offloading. 
Similarly, mobile phones are equipped with neural accelerators \cite{apple_neural_engine} that enable efficient local inference, while IoT devices such as cameras have specialized accelerators to perform visual tasks (e.g., facial recognition) locally \cite{netatmo_camera}. 
At the same time, edge servers can also be equipped with more powerful accelerators, such as GPUs,  enabling offloaded tasks to also run more efficiently or faster than before.
Thus, in the era of programmable accelerators, the question of whether to offload computations from local devices to the edge and when it is advantageous to do so needs to be reexamined.
In particular, such a reexamination needs to address several questions. 
(i) Under what scenarios will local on-device processing using accelerators outperform edge offloading?
(ii) Are there still scenarios where edge offloading has an advantage over local processing?
(iii) How does multi-tenancy at edge servers affect the decision to offload or execute on-device?

Motivated by the above questions, this paper takes an analytic model-driven approach to compare local and edge processing in the era of accelerators.  
Our hypothesis is that there is no a priori clear advantage of edge processing over on-device processing in the presence of accelerators, and that the choice of where to perform the processing depends on hardware, network, and workload characteristics.  
The novelty of this work lies in deriving closed-form equations for the expected end-to-end latency of each strategy, which are explainable and require no workload-specific training. These equations yield precise, quantitative performance-crossover predictions that can be embedded in OS schedulers and resource managers to dynamically decide whether to run an application component on the device or at the edge. 
In developing our analytic models and validating our hypothesis, this paper makes the following contributions.
\begin{itemize}[left=0pt]
\item We develop queuing theory–based models to model the behavior of device and edge accelerators when processing requests in on-device and edge offloading scenarios.
Our modeling adopts a two-level approach: 
it first uses estimated service times obtained through profiling or prediction techniques as input, and then embeds them into queuing models to predict end-to-end response times.
We develop closed-form analytic bounds using these models to determine when one approach outperforms the other under different hardware, network, and workload scenarios. 
We further extend the analysis to model multi-tenant edge servers serving multiple clients, and demonstrate how the models can be extended to capture device–edge collaborative (i.e., split) processing.

\item  We experimentally validate our models and bounds across diverse device and edge accelerators, network configurations, and workloads spanning multiple model architectures, including deep neural network (DNN), recurrent neural network (RNN), and large language model (LLM).
The results show that our models accurately predict on-device and offloading performance, achieving a 2.2\% mean absolute percentage error, with 91.5\% of predictions within $\pm$5\% and all within $\pm$10\% of the observed latency.

\item We develop a resource manager that leverages our models to adapt between on-device processing and edge offloading in response to real-world dynamics. We present two case studies demonstrating that our models enable effective adaptation to changing network conditions and fluctuating request rates at multi-tenant edge servers.
\end{itemize}

\section{Background}\label{sec:background}
This section provides background on computational offloading, hardware accelerators, and on-device processing.

\subsection{Computational Offloading}
Computational offloading is a well-known approach where client devices that are resource-constrained offload some or all of their application workload to a remote server. 
In the case of mobile devices such as AR/VR headsets, tablets, or smartphones, offloading can use nearby edge resources to perform latency-sensitive tasks \cite{Satya2009,Davis2004}. 
While such edge offloading involves a network hop to the edge server (also known as the mobility penalty \cite{Satyanarayanan2019}), subsequent processing at the edge is assumed to be much faster since edge resources are much less constrained than those of devices. 
In general, computational offloading offers benefits when local processing is constrained and the benefits of accessing faster processing at the edge or in the cloud outweigh the cost of additional network latency. 

\subsection{Hardware Accelerators}
In recent years, a wide variety of accelerators have been developed to speed up diverse compute tasks, particularly machine learning workloads. 
Table 1 depicts the characteristics of several commonly deployed device and edge accelerators. 
For example, NVIDIA TX2, Orin Nano, and A2 are all tailored for AI workloads such as deep learning and computer vision. 
With their small form factor and energy-efficient design, accelerators are increasingly integrated into mobile devices. 
For example, Apple devices include a Neural accelerator to accelerate machine learning tasks locally on the device \cite{apple_neural_engine}.  
With accelerators, fast on-device processing of tasks such as machine learning inference has become feasible without resorting to edge offloading. 
Meanwhile, edge servers can also be equipped with even more powerful GPUs.  
This enables offloaded tasks to also be accelerated at the edge, making edge offloading useful for more compute-intensive tasks that a device may not be able to handle using local resources.

\begin{table}[t]
\caption{Characteristics of common hardware accelerators.}
\vspace{-.5em}
\centering
\begin{small}
\begin{tabular}{lccccc}
\hline
\textbf{Accelerator} & \textbf{Power} & \textbf{Memory} & \textbf{Workload} \\ \hline
Google Edge TPU \cite{edgetpu2019} & 2 W & 8 MB & ML Inference \\ 
NVIDIA Jetson TX2 \cite{tx22017}& 15 W & 8 GB & GPU compute \\ 
NVIDIA Jetson Orin Nano \cite{orin2023}& 15 W & 8 GB & GPU compute \\ 
NVIDIA A2 GPU \cite{a22021}& 60 W & 16 GB & GPU compute \\ \hline
\end{tabular}
\end{small}
\label{tab:accelerators}
\end{table}
\subsection{On-device Processing versus Edge Offloading}
\begin{figure*}[t]
\centering
\subfloat[Edge offloading queuing model\label{fig:edge_model}]{
    \includegraphics[width=1.116\columnwidth]{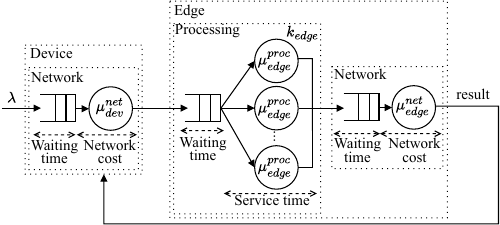}
}
\hspace{3.5em}
\subfloat[On-device processing queuing model\label{fig:device_model}]{
    \includegraphics[width=0.595\columnwidth]{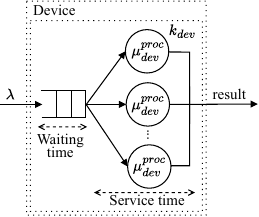}
}
    \caption{Modeling request execution using (a) edge offloading and (b) on-device processing.}
    \label{fig:queueing_models}
\end{figure*}
As noted above, the availability of accelerators on client devices has enabled on-device processing of many tasks without the need to offload those tasks to the edge.
Since local processing does not incur any network hops, it will often be faster than remote processing which incurs additional network latency.  
However, since edge servers can also be equipped with GPUs and other accelerators, in some compute-intensive scenarios, edge processing may still provide benefits over local processing on device accelerators.
Thus, in the era of accelerators, local on-device processing may be faster in some scenarios, while offloading to edge accelerators may still be faster in others (after accounting for network latency).

Even with extensive offline profiling, selecting the optimal execution strategy remains challenging due to real-world dynamics. 
On the one hand, mobile devices are subject to network variability such as fluctuations in wireless signal strength.
Since edge offloading involves transmitting input data, network fluctuations can introduce unpredictability in network delays, thereby affecting offloading performance. 
Additionally, devices running applications with prolonged operating durations such as live object detection may face battery constraints.
In low-power mode, processors may be underclocked by the OS to conserve energy, extending on-device execution times. 
On the other hand, edge servers are {\em multi-tenant} systems shared by multiple devices, which can lead to interference among clients and higher latencies. 
Since devices may independently decide when and whether to offload tasks, the edge server’s load can fluctuate dynamically with the number and type of incoming requests.
Unlike traditional cloud data centers, edge servers are often more resource-constrained and cannot elastically scale all applications under high load. As a result, workload spikes in multi-tenant scenarios can increase queuing delays and significantly extend end-to-end execution times. 
Moreover, edge servers are often multiplexed across multiple applications to maximize resource utilization.
GPUs and other accelerators designed for edge servers often lack isolation mechanisms such as GPU virtualization that are available in cloud-based GPUs \cite{MIG_nvidia}, leading to performance interference between applications.
Consequently, execution times at the edge can vary, making it challenging to decide whether to offload under dynamic workloads.

The above observations motivate the primary research question addressed in this paper: \textit{Considering the hardware processing capabilities of the device and edge server, along with the workload characteristics and real-world dynamics, under what scenarios does one strategy outperform the other?}  Specifically, does edge offloading still have a role to play when devices are equipped with on-board accelerators? How do workload dynamics, network dynamics, and edge multi-tenancy impact each strategy and their relative performance with respect to one another? 
To address these questions, the next section develops analytic models for the end-to-end execution time of edge offloading and on-device processing, capturing the key factors that impact application performance.

\section{Model-driven Performance Comparison}
\label{sec:models}
In this section, we develop analytic queuing models for edge offloading and on-device processing, and derive closed-form expressions for their expected latencies to compare the two approaches and identify when one outperforms the other.

\subsection{Analytic Queuing Models}
\label{sec:analysis}

Our analytic models are derived from queuing theory, a well-established mathematical tool for modeling response times experienced by requests in computing systems. Queuing theory has previously been used to model the performance of edge and cloud applications \cite{Ali2021,Harchol2013,Urgaonkar2005}, and here will serve as the foundation for analyzing the end-to-end latency of requests---from the time a task enters the system until it completes---under both edge offloading and on-device processing.

\noindent\textbf{Edge offloading.} 
To model the end-to-end latency of a request under edge offloading, we need to model the times spent by the request both at the device and at the edge server, which we model as two separate queuing systems as shown in Figure \ref{fig:edge_model}. We assume that a request arrives at the device and is then forwarded (``offloaded'') to the edge over the network, incurring network queuing and transmission delays. 
Upon arrival at the edge, the request is queued for processing and is subsequently scheduled for execution. 
Since accelerators can process multiple requests in parallel, we model the edge accelerator as a parallel system with a degree of parallelism $k_{edge}$.
After processing, the result is sent back to the device, which incurs network queuing and transmission delays on the reverse path.


\noindent\textbf{On-device processing.} Similar to edge offloading, we model on-device request processing as a queuing system shown in Figure \ref{fig:device_model}. 
Incoming requests generated by the device first enter a local device queue and are then scheduled onto one of the accelerator cores, where they are processed with a degree of parallelism $k_{dev}$. 
When processing completes, the result is returned to the local application, and the request exits the system.
This model captures an application where a client device generates tasks, uses its accelerator cores for local execution, and sends the results back to the application.


In queuing theory, the processing time of a request at each queue that it encounters consists of two key components: queuing delay (i.e., the time spent waiting in the queue) and service time (i.e., the time taken to execute the request on the accelerator).
Let $w_{dev}^{proc}$, $w_{edge}^{proc}$, $s_{dev}$, and $s_{edge}$ represent the wait times and request service times at the device and edge, respectively. 
Edge offloading additionally incurs network latency as requests are sent to the edge and results are returned to the device. 
Let $w_{dev}^{net}$ and $w_{edge}^{net}$ denote the respective network queuing delays on the device and the edge.
Also, let $n_{req}$ denote the network transmission delay of sending a request from the device to the edge, and $n_{res}$ denote the transmission delay of the response along the reverse path.

With the above notations, we now model the end-to-end latency of requests executed using each strategy: 
\begin{equation}
T_{edge} = w_{dev}^{net} + n_{req} + w_{edge}^{proc} + s_{edge} + w_{edge}^{net} + n_{res} \label{eq:edge}
\end{equation}
and 
\begin{equation}
T_{dev} = w_{dev}^{proc} + s_{dev} \label{eq:device}
\end{equation}
where $T_{edge}$ and $T_{dev}$  denote the end-to-end request latency under edge offloading and on-device processing, respectively. 

\subsection{Deriving Service Times on Accelerators\label{sec:service_time}}
Our models require the service time of a workload as input to compute end-to-end response times. 
The service time reflects the computational demand on the target accelerator and represents the time needed to process a request on that platform. 
It can be obtained either through empirical profiling or via a predictive deep learning model.
For profiling, many applications already provide function-level performance logging or profiling hooks \cite{nvidia2023fastertransformer}, which can be used to derive service times by analyzing execution metrics. 
Alternatively, the workload can be benchmarked directly on the target hardware using tools such as \cite{Reddi2020,nvidiasmi}.
For model-based prediction, a neural network can be trained to predict the service time of a model on a given hardware. 
For example, prior work predicts DNN inference latency by analyzing network structure and parameter size \cite{Wang2024,Yao2018,Kang2017}, RNN inference latency by examining temporal dependencies and sequence unrolling behavior \cite{Ji2024,Li23}, and LLM inference latency by using internal layer embeddings to predict remaining output length \cite{Rana2024}.
While any of these prior techniques can be used with our analytic models, in our experiments we adopt a simple neural network from \cite{Kang2017} to predict service times.

\subsection{Deriving and Analyzing End-to-End Latency\label{sec:deriving}}
In this section we compute the expected latency for edge offloading and on-device processing.
Our analysis focuses on applications that require results on the device, such as AR/VR workloads. 
 While the models assume edge offloading returns results to the device which incurs network delay, they can be generalized to applications where results are consumed on the edge (e.g., IoT sensing) by omitting this network delay. 
In the following, we leverage classical queuing theory to fundamentally understand the conditions under which one strategy outperforms the other.

\subsubsection{Comparing Edge Offload and On-Device Execution\label{sec:dedicated}}
We begin our analysis by examining the conditions under which edge offloading results in higher average latency compared to on-device processing.
For workloads that benefit from hardware acceleration such as those utilizing GPUs, processing times can be significantly reduced compared to processing on general-purpose CPUs. 
In the following discussion, we refer to such workloads as accelerator-driven workloads. 
While our analysis primarily focuses on DNN inference, a representative component commonly found in AR/VR applications, the results generalize to other AI workloads as discussed in Sec \ref{sec:discussions}.
Table \ref{tab:notations} summarizes the notation used in our models.
\begin{table}[!t]
\centering
\caption{Notations used in analytic models.}
\vspace{-.5em}
\begin{tabular}{ll}
\hline
Notation & Description \\ \hline  
$T_{dev}$ & End-to-end latency of on-device processing  \vspace{0.19em} \\
$T_{edge}$ & End-to-end latency of edge offloading  \vspace{0.19em} \\
$\lambda$ & Request arrival rate \vspace{0.19em} \\ 
$B$ & Bandwidth of the device-to-edge network path \vspace{0.19em} \\ 
$D_{req}, D_{res}$ & Payload sizes of request/result  \vspace{0.25em} \\  
$w_{dev}^{net},w_{edge}^{net}$ & Network wait time at the device/edge \vspace{0.13em} \\
$n_{req},n_{res}$ & Network transmission time of request/response  \vspace{0.25em} \\
$\mu_{dev}^{net}, \mu_{edge}^{net}$ & Service rate of device/edge NIC  \vspace{0.13em} \\ 
$k_{dev},k_{edge}$ & Parallelism level of processors at device/edge \vspace{0.19em} \\ 
$w_{dev}^{proc},w_{edge}^{proc}$ & Processing wait time at the device/edge  \vspace{0.13em} \\
$s_{dev},s_{edge}$ & Service time at the device/edge  \vspace{0.25em} \\ 
$\mu^{proc}_{dev}, \mu^{proc}_{edge}$ & Service rate of device/edge  \vspace{0.13em} \\ 
\hline
\end{tabular}
\label{tab:notations}
\end{table}
\begin{lemma}
\label{eq:d_vs_e_lemma}
For accelerator-driven workloads, edge offloading incurs a higher average end-to-end latency than on-device processing when 
\begin{dmath}
      s_{dev} - s_{edge} 
      <
      \frac{\lambda}{\mu_{dev}^{net} (\mu_{dev}^{net} - \lambda)}
      + \frac{\lambda}{\mu_{edge}^{net} (\mu_{edge}^{net} - \lambda)}
      + \frac{D_{req} + D_{res}}{B}  
      + \frac{1}{2} \left( \frac{1}{k_{edge}\mu_{edge}^{proc} - \lambda} - \frac{1}{k_{edge}\mu_{edge}^{proc}} \right) 
      - \frac{1}{2} \left( \frac{1}{k_{dev}\mu_{dev}^{proc} - \lambda} - \frac{1}{k_{dev}\mu_{dev}^{proc}} \right) 
      \label{eq:lemma1}
\end{dmath}
\end{lemma}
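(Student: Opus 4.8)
The stated inequality is just $E[T_{edge}] > E[T_{dev}]$ written out and rearranged, so the plan is to expand $E[T_{edge}]$ and $E[T_{dev}]$ via the mean-delay formula for each stage and then collect terms. First I would take expectations in Eq.~\eqref{eq:edge} and Eq.~\eqref{eq:device}; treating the per-request accelerator service demands $s_{dev}$ and $s_{edge}$ as deterministic, $E[T_{edge}] > E[T_{dev}]$ is equivalent to $s_{dev} - s_{edge} < E[w_{dev}^{net}] + E[n_{req}] + E[n_{res}] + E[w_{edge}^{net}] + E[w_{edge}^{proc}] - E[w_{dev}^{proc}]$, and it remains only to evaluate the six averages on the right.

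For the two network hops I would model the device and edge NICs as $M/M/1$ queues fed by the Poisson request stream of rate $\lambda$. The device NIC transmits the request, so its service rate is $\mu_{dev}^{net} = B/D_{req}$, giving mean transmission time $E[n_{req}] = D_{req}/B$ and, by the classical $M/M/1$ result, mean queuing delay $E[w_{dev}^{net}] = \lambda / (\mu_{dev}^{net}(\mu_{dev}^{net} - \lambda))$; symmetrically the edge NIC transmits the response with rate $\mu_{edge}^{net} = B/D_{res}$, so $E[n_{res}] = D_{res}/B$ and $E[w_{edge}^{net}] = \lambda / (\mu_{edge}^{net}(\mu_{edge}^{net} - \lambda))$. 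Summing these four contributions produces exactly the first three summands of the right-hand side of Eq.~\eqref{eq:lemma1}: the two $M/M/1$ delay terms together with $(D_{req} + D_{res})/B$.

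For the two processing stages I would model each accelerator as a deterministic-service ($M/D/1$) queue in which the parallelism level acts as a speedup, i.e., an effective aggregate service rate $k_{dev}\mu_{dev}^{proc}$ at the device and $k_{edge}\mu_{edge}^{proc}$ at the edge, with a fixed per-request service demand. The Pollaczek--Khinchine formula at zero service-time variance then gives, at each side, $E[w^{proc}] = \lambda / (2\, k\mu^{proc}(k\mu^{proc} - \lambda)) = \tfrac{1}{2}(1/(k\mu^{proc} - \lambda) - 1/(k\mu^{proc}))$, which is exactly where the factor $\tfrac{1}{2}$ in Eq.~\eqref{eq:lemma1} originates. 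Substituting $E[w_{edge}^{proc}]$ and $E[w_{dev}^{proc}]$ into the equivalent inequality from the first step and grouping terms reproduces Eq.~\eqref{eq:lemma1} verbatim.

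The algebra is routine, so the substance of the argument lies in the modeling choices rather than the computation, and that is where I expect the main difficulty: (i) that the NICs are well described by $M/M/1$ queues while the accelerators are well described by $M/D/1$ queues; (ii) that a $k$-way parallel accelerator can be collapsed into a single server of rate $k\mu^{proc}$; and (iii) the stability conditions $\lambda < \mu_{dev}^{net}$, $\lambda < \mu_{edge}^{net}$, $\lambda < k_{dev}\mu_{dev}^{proc}$, and $\lambda < k_{edge}\mu_{edge}^{proc}$, which are needed for every denominator in Eq.~\eqref{eq:lemma1} to be positive and for all four queues to have finite mean delays. I would state these as the standing assumptions under which the lemma holds.
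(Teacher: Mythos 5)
Your proposal is correct and follows essentially the same route as the paper: rearrange $E[T_{edge}]>E[T_{dev}]$, model the NICs as $M/M/1$ queues and the accelerators as $M/D/1$ queues with aggregated service rate $k\mu^{proc}$ (the paper's stated approximation of $M/D/k$), apply the Pollaczek--Khinchine formula, and set all arrival rates to $\lambda$. Your explicit statement of the stability conditions and of the NIC service rates $\mu^{net}=B/D$ makes implicit assumptions in the paper's proof precise, but the argument is otherwise identical.
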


\begin{proof}
Edge offloading will offer a worse latency than on-device processing when $T_{device} < T_{edge}$. That is 
\begin{equation}
     w_{dev}^{proc} + s_{dev} < w_{dev}^{net} + n_{req} + w_{edge}^{proc} + s_{edge} + w_{edge}^{net} + n_{res} 
\end{equation}

Let $D_{req}$ and $D_{res}$ denote the request and result payload size, respectively.
Further, let $B$ denote the bandwidth of the device-to-edge network path.
The above inequality reduces to 
\begin{equation}
     s_{dev} - s_{edge} < w_{dev}^{net} + w_{edge}^{net} + \frac{D_{req}}{B} + \frac{D_{res}}{B} + w_{edge}^{proc} - w_{dev}^{proc}
     \label{eq:d_vs_e_1}
\end{equation}

As shown in \cite{Liang2023}, DNN inference on accelerators exhibits deterministic service times (denoted as $D$), as the number of operations per request remains constant. 
Since typical accelerators have a small degree of parallelism $k$ \cite{Amert2017}, the queuing system can be modeled as an $M/D/k/FCFS$ system, where request arrivals follow a Poisson process. 
While no closed-form solution exists for the expected wait time in such a system, a common approximation is to aggregate the service rate across the degree of parallelism $k$ \cite{Shah2018,Shah2020}.
This reduces the model to an $M/D/1/FCFS$ system with aggregated service rate $k\mu$.
Its expected wait time can be derived from the Pollaczek-Khinchin (P-K) formula \cite{Gross2008,Harchol2013} and is given by 
\begin{equation}
        E[w_{M/D/1}] =\frac{1}{2} \left( \frac{1}{k\mu - \lambda} - \frac{1}{k\mu} \right) \label{eq:md1}
\end{equation}
On the other hand, network interfaces with a single controller can be modeled as $M/M/1/FCFS$ systems with expected wait times given by 
\begin{equation}
        E[w_{M/M/1}] = \frac{1}{\mu - \lambda} - \frac{1}{\mu} \label{eq:mm1}
\end{equation}
Let $k_{dev}$ and $k_{edge}$ denote the degree of parallelism seen at the accelerator on the device and the edge, respectively. 
Applying these two results to (\ref{eq:d_vs_e_1}), we get 
\begin{dmath}
      s_{dev} - s_{edge} < \frac{1}{\mu_{dev}^{net} - \lambda_{dev}^{net}} - \frac{1}{\mu_{dev}^{net}}
      + \frac{1}{\mu_{edge}^{net} - \lambda_{edge}^{net}} - \frac{1}{\mu_{edge}^{net}}
      + \frac{D_{req} + D_{res}}{B}  
      + \frac{1}{2} \left( \frac{1}{k_{edge}\mu_{edge}^{proc} - \lambda_{edge}} - \frac{1}{k_{edge}\mu_{edge}^{proc}} \right) 
      - \frac{1}{2} \left( \frac{1}{k_{dev}\mu_{dev}^{proc} - \lambda_{dev}} - \frac{1}{k_{dev}\mu_{dev}^{proc}} \right) 
      \label{eq:d_vs_e_2}
\end{dmath}

Offloading is required only when a request arrives, which gives us $\lambda^{net}_{dev} = \lambda$. 
For a queuing system to be stable, throughput (i.e., task completion rate) must equal the task arrival rate $\lambda$ under steady-state conditions. 
When tasks are offloaded to an edge server, results are transmitted to the device only upon task completion, which yields $\lambda_{edge}^{net} = \lambda$.  
Moreover, regardless of the execution strategy, the application sees the same arrival rate, i.e., \( \lambda_{dev}^{proc} = \lambda_{edge}^{proc} = \lambda \). 
Substituting the above into (\ref{eq:d_vs_e_2}), we obtain
\begin{dmath}
      s_{dev} - s_{edge} < \frac{\lambda}{\mu_{dev}^{net} (\mu_{dev}^{net} - \lambda)}
      + \frac{\lambda}{\mu_{edge}^{net} (\mu_{edge}^{net} - \lambda)}
      + \frac{D_{req} + D_{res}}{B}  
      + \frac{1}{2} \left( \frac{1}{k_{edge}\mu_{edge}^{proc} - \lambda} - \frac{1}{k_{edge}\mu_{edge}^{proc}} \right) 
      - \frac{1}{2} \left( \frac{1}{k_{dev}\mu_{dev}^{proc} - \lambda} - \frac{1}{k_{dev}\mu_{dev}^{proc}} \right) 
\end{dmath}
which completes the proof. 
\end{proof}
The above result implies the following two remarks.

\begin{remark}
\label{eq:remark1}
On-device processing is likely to outperform edge offloading for workloads with lower computational demand. 
\end{remark}
\begin{explanation}
Service time, the time spent processing a request on an accelerator, is determined by the workload's computational demand divided by the hardware's processing capacity.  For any given hardware configuration, requests with lower computational demands have to be processed much faster at the edge to overcome the network penalty for edge offloading to outperform on-device processing. 
For requests with lower demand, the difference in average service times $s_{dev} - s_{edge}$ shrinks proportionally as the disparity in processing capacity between the device and the edge becomes less impactful.  Further, since the service rate is the inverse of service time in queuing theory (i.e., $s=1/\mu$), the terms with $\mu_{dev}^{proc}$ and $\mu_{edge}^{proc}$ on the right-hand side of the inequality in (\ref{eq:lemma1}) will also become smaller since $\mu$ increases for lighter requests.  Meanwhile, the terms representing the network overheads remain unchanged, as they depend on the payload size rather than the request's processing demand. As a result, the inequality in (\ref{eq:lemma1}) is more likely to hold for requests with lighter processing demand.
\end{explanation}

\begin{remark}
\label{eq:remark2}
On-device processing is likely to outperform edge offloading on slower networks or when workloads have larger request or result payloads. 
\end{remark}
\begin{explanation}
The term \( \frac{D_{req} + D_{res}}{B} \) in (\ref{eq:lemma1}) increases when either the network bandwidth \( B \) decreases or the request and result sizes \( D_{req} \) and \( D_{res} \) become larger, thereby increasing the right-hand side of the inequality.  
In addition, longer transmission times reduce the NIC service rates at the device and edge ($\mu_{dev}^{net}$ and $\mu_{edge}^{net}$), further increasing the right-hand side.  
As a result, the inequality is more likely to hold under slower networks or for tasks with larger payloads, making on-device processing faster than edge offloading.
\end{explanation}

\noindent\textbf{Practical takeaways. } 
Lemma \ref{eq:d_vs_e_lemma} establishes a  bound on the average service time difference between the device and the edge for edge offloading to be more effective.  The bound and the resulting inequality depend on several factors: (i) relative processing speeds of the device and the edge $s_{dev}$ and $s_{edge}$ (and the corresponding service rates $\mu_{dev}$ and $\mu_{edge}$), (ii) overall workload denoted by request rate $\lambda$, (iii) network transmission overheads.  In general, the edge processing capacity must be larger than the device processing capacity by a factor that overcomes the network penalty of offloading; otherwise, on-device processing becomes preferable to offloading.

Moreover, the request rate $\lambda$ impacts offloading decisions in two competing ways. 
First, as $\lambda$ increases, both the edge and the device experience longer queuing delays, though the edge queue grows more slowly due to its faster processing speed, which favors offloading.
Second, under slow networks or for tasks with large payloads, a high $\lambda$ increases network queuing delay, which penalizes offloading.
Consequently, the optimal strategy depends on the workload’s payload size relative to network bandwidth and the performance gap between the device and the edge.

Remark \ref{eq:remark1} indicates that edge offloading is likely to outperform on-device processing for workloads with high computational demand as the edge’s processing speed advantage dominates, while on-device processing is more efficient for lighter workloads since network overhead can outweigh the edge’s advantage.
Remark \ref{eq:remark2} shows that edge offloading becomes less efficient with limited bandwidth or large payloads, which increase network queuing and transmission delays, whereas on-device processing remains unaffected.

\noindent\textbf{Extending to collaborative processing.}
Collaborative (i.e., split) processing involves partially processing requests locally on the device before offloading them to a remote edge server to minimize the amount of data sent over the network \cite{Almeida2022,Banitalebi2021,Hu2019,Huang2020,Kang2017,Li2018,Yang2022,Chen2025}.
To compute the end-to-end response time of such a strategy, we combine our analytic models for on-device processing and edge offloading into a tandem queuing network that accounts for both the partial local computation and the subsequent transmission and processing delays at the edge server.
Specifically, a request first enters the components modeled in Figure \ref{fig:device_model} and is partially processed with a service time $s'_{dev}$ representing the partial local processing time. 
Then, the intermediate output of size $D_{inter}$ passes through the components modeled in Figure \ref{fig:edge_model}, with a network transmission time of $D_{inter}/B$ and a service time $s'_{edge}$ representing the remaining computation time on the edge. 
This combined model can then be used to estimate performance and inform decision-making when selecting between local, edge, or collaborative processing.

\subsection{Impact of Multi-Tenancy}
While the analysis in Lemma \ref{eq:d_vs_e_lemma} implicitly assumes an edge server that is dedicated to a client device,
in practice, edge servers will be multiplexed across multiple client devices and service multiple applications that offload their work. 
Hence, we extend our model-driven comparison to multi-tenancy scenarios.

\noindent\textbf{Modeling edge application multiplexing. }
In the case where each edge server is dedicated to a single client device, the workload (i.e., the request arrival rate) at the edge and at the device $\lambda^{proc}_{dev}$ and $\lambda^{proc}_{edge}$ are identical. 
However, when an edge server is multiplexed across $m$  devices, its workload will be the sum of the individual device workloads under the offloading case, as shown below: 
\begin{center}
\includegraphics[width=.58\columnwidth]{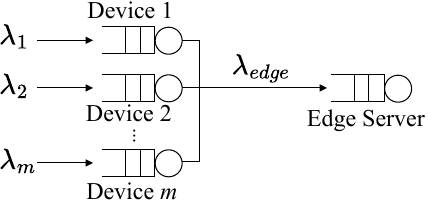}
\end{center}
This higher workload will increase the queuing delays at the edge and may cause the end-to-end offloading latency to exceed the on-device processing latency.
Since the request arrivals at each client device are governed by independent Poisson processes, and the composition of independent Poisson processes results in a new Poisson process \cite{pishro2014}, it follows that the aggregate edge workload is also Poisson with an arrival rate $\lambda_{edge}$ equal to the sum of the workloads of all devices. That is,
$\lambda_{edge} = \sum_{i=0}^m \lambda_i .$, where $\lambda_i$ denotes the workload at the $i-th$ device.

We assume that the multiplexed applications have different service time demands, and the aggregate service time distribution seen at the edge will be arbitrary. 
In this case, we model the edge system as an $M/G/1$ system, where $M$ denotes Poisson arrivals and $G$ denotes the general (arbitrary) service time distribution seen at the edge.
This extension captures the potentially large variability in service times across applications in multi-tenant environments. 
Let $s_i$ denote the average service time of requests of device $i$ when processed at the edge, then $s_{edge}$ is the weighted average of these service times. That is, $s_{edge} = \sum_i \frac{\lambda_i}{\lambda_{edge}}  s_i$ with the effective service rate of the edge given by $\mu_{edge}^{proc} = 1 / s_{edge}$. The aggregate utilization of the edge server is then $\rho_{edge} = \lambda_{edge} / \mu_{edge}^{proc}$. This leads us to the following lemma for the multi-tenant case.


\begin{lemma} For multi-tenant edge servers, the average end-to-end latency of edge offloading is higher than on-device processing when 
\begin{dmath}
      s_{dev} - s_{edge} 
      <
      \frac{\lambda_{dev}}{\mu_{dev}^{net} (\mu_{dev}^{net} - \lambda_{dev})}
      + \frac{\lambda_{edge}}{\mu_{edge}^{net} (\mu_{edge}^{net} - \lambda_{edge})}
      + \frac{D_{req} + D_{res}}{B}  
      +  \frac{\rho_{edge} + \lambda_{edge} k_{edge}\mu_{edge}^{proc} \text{Var}[s_{edge}]}{2(k_{edge}\mu_{edge}^{proc} - \lambda_{edge})}
      - \frac{1}{2} \left( \frac{1}{k_{dev}\mu_{dev}^{proc} - \lambda_{dev}} - \frac{1}{k_{dev}\mu_{dev}^{proc}} \right) 
      \label{eq:lemma2}
\end{dmath}   
\end{lemma}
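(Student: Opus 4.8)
The plan is to re-run the proof of Lemma~\ref{eq:d_vs_e_lemma} almost verbatim, replacing only the edge processing queue and splitting the single arrival rate $\lambda$ into a device rate $\lambda_{dev}$ and an aggregate edge rate $\lambda_{edge}$. First I would start from the crossover condition $T_{dev} < T_{edge}$, expand it with (\ref{eq:edge}) and (\ref{eq:device}) to obtain $w_{dev}^{proc} + s_{dev} < w_{dev}^{net} + n_{req} + w_{edge}^{proc} + s_{edge} + w_{edge}^{net} + n_{res}$, move $s_{dev} - s_{edge}$ to the left-hand side, and substitute the transmission times $n_{req} = D_{req}/B$ and $n_{res} = D_{res}/B$. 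This yields the multi-tenant analogue of (\ref{eq:d_vs_e_1}), with all wait times still to be instantiated.

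Next I would fix the arrival rate seen by each queue. A request enters the device's network queue, and (under on-device execution) the device's processing queue, only when it is generated locally, so both see the device's own rate $\lambda_{dev}$. By the superposition argument stated just before the lemma, the offloaded streams from the $m$ clients combine into a Poisson process of rate $\lambda_{edge} = \sum_i \lambda_i$ at the edge; and by the same flow-balance reasoning used in Lemma~\ref{eq:d_vs_e_lemma} (results are returned only on task completion, and in steady state the completion rate equals the arrival rate), the edge processing queue and the reverse-path network queue at the edge both see rate $\lambda_{edge}$. The two single-controller NICs remain $M/M/1$, so $w_{dev}^{net}$ and $w_{edge}^{net}$ follow from (\ref{eq:mm1}) with parameters $(\mu_{dev}^{net}, \lambda_{dev})$ and $(\mu_{edge}^{net}, \lambda_{edge})$, producing the first two terms on the right of (\ref{eq:lemma2}); the device accelerator (deterministic service, small parallelism) is still $M/D/1$ with aggregated rate $k_{dev}\mu_{dev}^{proc}$, so $w_{dev}^{proc}$ is the P-K term (\ref{eq:md1}) with arrival rate $\lambda_{dev}$, giving the final subtracted term.

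The only genuinely new piece is $w_{edge}^{proc}$. Since the multiplexed applications contribute heterogeneous service times, I would model the edge accelerator as an $M/G/1$ queue, applying the same aggregated-rate reduction to a single effective server of rate $k_{edge}\mu_{edge}^{proc}$ (hence effective mean service time $1/(k_{edge}\mu_{edge}^{proc})$ and utilization $\rho_{edge} = \lambda_{edge}/(k_{edge}\mu_{edge}^{proc})$), and then invoke the general Pollaczek-Khinchin formula $E[w] = \lambda_{edge} E[S^2]/\bigl(2(1-\rho_{edge})\bigr)$ with second moment $E[S^2] = \text{Var}[s_{edge}] + \bigl(1/(k_{edge}\mu_{edge}^{proc})\bigr)^2$. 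Multiplying numerator and denominator by $k_{edge}\mu_{edge}^{proc}$ collapses this to $\bigl(\rho_{edge} + \lambda_{edge} k_{edge}\mu_{edge}^{proc}\,\text{Var}[s_{edge}]\bigr)/\bigl(2(k_{edge}\mu_{edge}^{proc} - \lambda_{edge})\bigr)$, exactly the fourth term of (\ref{eq:lemma2}). Substituting all four wait times together with the two transmission terms into the rearranged inequality then finishes the proof.

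I expect the main obstacle (beyond routine algebra) to lie in the edge $M/G/1$ term: justifying the aggregated-rate approximation for a general-service parallel server by the same heuristic the paper uses for $M/D/k$, keeping straight that $\text{Var}[s_{edge}]$ is the variance of the mixture service-time distribution $s_{edge} = \sum_i (\lambda_i/\lambda_{edge})\, s_i$ while the mean appearing in $E[S^2]$ is the effective $1/(k_{edge}\mu_{edge}^{proc})$, and fixing the normalization of $\rho_{edge}$ so that it is consistent between the numerator and the $k_{edge}\mu_{edge}^{proc}$ in the denominator, thereby recovering the compact form in (\ref{eq:lemma2}). Everything else is a direct specialization of Lemma~\ref{eq:d_vs_e_lemma} with $\lambda$ split into $\lambda_{dev}$ and $\lambda_{edge}$.
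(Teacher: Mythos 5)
Your proposal is correct and follows essentially the same route as the paper: substitute the $M/M/1$ waits for the two NICs (at rates $\lambda_{dev}$ and $\lambda_{edge}$), the $M/D/1$ Pollaczek--Khinchin wait for the device accelerator, and the $M/G/1$ Pollaczek--Khinchin wait with aggregated service rate $k_{edge}\mu_{edge}^{proc}$ for the multi-tenant edge into the rearranged inequality (\ref{eq:d_vs_e_1}). Your explicit derivation of the compact $M/G/1$ term from $\lambda E[S^2]/\bigl(2(1-\rho)\bigr)$, with $\rho_{edge}$ normalized by $k_{edge}\mu_{edge}^{proc}$, is the right reading and is in fact slightly more careful than the paper, which merely quotes the compact formula and defines $\rho_{edge}=\lambda_{edge}/\mu_{edge}^{proc}$ without the factor $k_{edge}$.
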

\begin{proof}
    The expected wait time of $M/G/1/FCFS$ systems is well-known and is given by P-K formula \cite{Gross2008,Harchol2013} 
    \begin{dmath}        
        E[w_{M/G/1}] = \frac{\rho + \lambda \mu \text{Var}[s]}{2(\mu - \lambda)} 
        \label{eq:mg1}
    \end{dmath}
    Substituting (\ref{eq:mg1}) for $w_{edge}^{proc}$, (\ref{eq:md1}) for $w_{dev}^{proc}$, and (\ref{eq:mm1}) for $w_{dev}^{net}$ and $w_{edge}^{net}$ into (\ref{eq:d_vs_e_1}) completes the proof.
\end{proof}

\noindent\textbf{Practical takeaways. }
The expected wait time in $M/G/1/FCFS$ systems is heavily influenced by the variance in service times, meaning that co-located applications with highly variable service demands can significantly increase queuing delays.
It is therefore important for application designers to consider service time variability when assigning applications to edge servers.
Grouping applications with similar processing demands can help reduce overall variability, thereby minimizing queuing delays and improving edge offloading performance.
Further, the above analysis assumes that isolation features such as GPU virtualization, which are available on server-grade  GPUs, \emph{are not available on edge accelerators}. If such features were available, each device could be assigned to an isolated virtual GPU on the edge. In this special case, the offloading situation reduces to the one similar to that considered in Sec \ref{sec:dedicated} where each virtual GPU provides $\frac{1}{m}$-th of the capacity to each device.
However, many edge accelerators lack such features, resulting in a shared accelerator that services a combined workload.

\subsection{Discussion} \label{sec:discussions}
\noindent\textbf{Generalizing to other ML workloads. }
While the above analysis focuses on DNN workloads modeled with deterministic service times, our models can be extended to other accelerator-driven workloads with different service time characteristics by incorporating appropriate queuing formulations. 
For example, recurrent models have variable execution times that depend on input length.
Similarly, LLM inference exhibits variable service times due to variations in the number of autoregressive decoding steps.
To capture such variability, we model the processing stages as $M/M/1$ queues with exponential service times and effective service rates $k\mu$. 
We utilize $M/M/1$ rather than $M/M/k$ because the standard $M/M/k$ derivation relies on birth-death processes and factorials that require $k$ to be a discrete integer.
Based on our empirical observations, accelerator parallelism is more fine-grained and can be more accurately represented by treating $k$ as a continuous multiplier for the service rate.
Under this formulation, we obtain the following lemma.
\begin{lemma}
\label{eq:d_vs_e_lemma}
For recurrent models and LLM workloads, edge offloading incurs a higher average end-to-end latency than on-device processing when 
\begin{dmath}
      s_{dev} - s_{edge} 
      <
      \frac{\lambda}{\mu_{dev}^{net} (\mu_{dev}^{net} - \lambda)}
      + \frac{\lambda}{\mu_{edge}^{net} (\mu_{edge}^{net} - \lambda)}
      + \frac{D_{req} + D_{res}}{B}  
      + \frac{1}{k_{edge}\mu_{edge}^{proc} - \lambda} - \frac{1}{k_{edge}\mu_{edge}^{proc}} 
      - \frac{1}{k_{dev}\mu_{dev}^{proc} - \lambda} + \frac{1}{k_{dev}\mu_{dev}^{proc}}
      \label{eq:lemma1}
\end{dmath}
\end{lemma}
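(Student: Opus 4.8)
The plan is to mirror the proof of the earlier dedicated-edge lemma, substituting the exponential-service queuing formula at the processing stages in place of the deterministic one. First I would start from the condition $T_{dev} < T_{edge}$, expand both sides using the end-to-end decompositions~(\ref{eq:device}) and~(\ref{eq:edge}), and cancel the common terms to recover the reduced inequality~(\ref{eq:d_vs_e_1}), namely $s_{dev} - s_{edge} < w_{dev}^{net} + w_{edge}^{net} + \frac{D_{req}}{B} + \frac{D_{res}}{B} + w_{edge}^{proc} - w_{dev}^{proc}$. This step is verbatim the dedicated case, since only the subsequent instantiation of the processing wait times changes.

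Next I would instantiate the four wait-time terms. For the two network stages I reuse the $M/M/1/FCFS$ expected wait time~(\ref{eq:mm1}), exactly as before. The key departure is at the processing stages: as argued in the text preceding the lemma, recurrent and LLM workloads have variable rather than deterministic service times, so I model each processing stage as an $M/M/1/FCFS$ queue with parallelism folded into the service rate as a continuous multiplier, i.e., with effective rates $k_{dev}\mu_{dev}^{proc}$ and $k_{edge}\mu_{edge}^{proc}$. Applying~(\ref{eq:mm1}) at these rates gives $w_{dev}^{proc} = \frac{1}{k_{dev}\mu_{dev}^{proc} - \lambda} - \frac{1}{k_{dev}\mu_{dev}^{proc}}$ and the analogous expression for $w_{edge}^{proc}$; the crucial difference from the dedicated-edge bound is the absence of the factor $\tfrac12$ carried by the $M/D/1$ P-K formula~(\ref{eq:md1}), which is exactly why~(\ref{eq:lemma1}) drops that factor.

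Then I would substitute these expressions into~(\ref{eq:d_vs_e_1}) and simplify. As in the dedicated case, the arrival rate seen at every stage equals $\lambda$: offloading is triggered only when a request arrives, so $\lambda_{dev}^{net} = \lambda$; stability forces the reverse path to carry completed requests at rate $\lambda$, so $\lambda_{edge}^{net} = \lambda$; and the application sees the same arrival rate regardless of strategy, so $\lambda_{dev}^{proc} = \lambda_{edge}^{proc} = \lambda$. Combining each network pair via $\frac{1}{\mu^{net} - \lambda} - \frac{1}{\mu^{net}} = \frac{\lambda}{\mu^{net}(\mu^{net} - \lambda)}$ and merging the transmission terms into $\frac{D_{req} + D_{res}}{B}$ then yields~(\ref{eq:lemma1}), completing the proof.

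I do not expect a genuine obstacle, as the derivation is a direct adaptation of the earlier proof and the single-class Poisson structure carries over unchanged. The only points meriting care are the modeling justification already given in the surrounding text --- treating a parallel accelerator as $M/M/1$ with a \emph{continuous} rate multiplier $k$ rather than $M/M/k$, since the $M/M/k$ birth--death derivation requires integer $k$ whereas accelerator parallelism is empirically fine-grained --- and verifying the stability preconditions $\lambda < \mu_{dev}^{net}$, $\lambda < \mu_{edge}^{net}$, $\lambda < k_{dev}\mu_{dev}^{proc}$, and $\lambda < k_{edge}\mu_{edge}^{proc}$ so that all of the $M/M/1$ wait-time expressions are finite and positive.
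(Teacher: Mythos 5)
Your proposal is correct and follows essentially the same route as the paper: reduce $T_{dev}<T_{edge}$ to the inequality in~(\ref{eq:d_vs_e_1}) and substitute the $M/M/1$ wait-time expression~(\ref{eq:mm1}) with effective rates $k\mu^{proc}$ for all four wait-time terms, with the arrival rate at every stage equal to $\lambda$. Your observations about the dropped factor of $\tfrac12$ relative to the $M/D/1$ case and the stability preconditions are accurate and consistent with the paper's (much terser) argument.
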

\begin{proof}
Substituting the expected wait time of $M/M/1$ systems (\ref{eq:mm1}) for $w_{dev}^{net}, w_{edge}^{net}, w_{edge}^{proc}, w_{dev}^{proc}$ into (\ref{eq:d_vs_e_1}) completes the proof. 
\end{proof}
\noindent By selecting queueing formulations that match workload variability, the framework accommodates diverse ML workloads and provides accurate response-time estimates, as demonstrated in Sec~\ref{sec:RNN}.


\noindent\textbf{Model limitations. }
Our analysis assumes request arrivals follow a Poisson process and are independent, which may not hold under bursty arrivals. 
To address this, our models can adopt a $G/G/1$ system, which accommodates arbitrary arrival and service time distributions. 
Although no closed-form solution exists for the expected wait time in this setting, Marshall et al. \cite{marshall1968} provide an upper bound 
\begin{dmath}
      E[w_{G/G/1}] \leq \frac{\lambda (\sigma_a^2 + \sigma_s^2)}{2(1-\rho)}
\end{dmath}  
where $\sigma_a^2$ and $\sigma_s^2$ are the variances of the request interarrival time and service time distributions, respectively. 
Additionally, the queuing results used in our analysis assume that each queue has infinite capacity. 
For scenarios with bounded queue sizes, we can incorporate results for finite-buffer queuing systems \cite{Harchol2013} to model the impact of such capacity constraints.

\section{Model Validation}
In this section, we empirically validate the accuracy of our analytic models for predicting the end-to-end response times of edge offloading, on-device processing, and collaborative processing across a wide range of configurations.

\subsection{Experimental Setup}

\noindent{\textbf{Workloads.}} 
\label{sec:workloads}
We validate our model’s ability to capture the impact of workload characteristics using DNN, RNN, and LLM workloads.
For DNN workloads, we focus on inference tasks for AR applications, specifically image classification, using three widely deployed architectures: MobileNet, Inception, and YOLO, each with distinct compute demands and input sizes: 
\begin{small} 
\begin{center}
\centering
\begin{tabular}{lccc}
\toprule
\textbf{Model} & \textbf{Input Shape} & \textbf{Params (M)} & \textbf{FLOPs (G)} \\
\midrule
MobileNetV2 & $[224,224,3]$ & 3.5 & 0.6 \\
InceptionV4 & $[299,299,3]$ & 42.7 & 6.3 \\
YOLOv8n & $[640,640,3]$ & 3.2 & 8.7 \\ \hline
\end{tabular} 
\end{center}
\end{small} 
All requests are executed using TensorFlow 2.13.0 on frames extracted from a 720p video.
For RNN, we use a Long Short-Term Memory (LSTM) model implemented using TensorFlow’s Keras API for sentiment classification, with the Amazon Alexa Topical Chat dataset \cite{Topical2023} as input. 
For LLM, we use the Llama-3.2-1B model \cite{llama3_2_1B} run via llama.cpp \cite{llama_cpp}, with prompts from the Hugging Face ShareGPT dataset \cite{sharegpt2024} as input.
The specifications of both models are summarized below: 
\begin{small}
\begin{center}
\begin{tabular}{lccc}
\toprule
\textbf{Model} & \textbf{Architecture} & \textbf{Parameters} & \textbf{\# of Layers} \\
\midrule
LSTM & Recurrent & 40 M & 3 \\
Llama-3.2-1B & Transformer & 1.2 B & 16 \\
\bottomrule
\end{tabular}
\end{center}
\end{small}
Since application workloads are often dynamic (e.g., triggered by motion detection), we implement a workload generator that generates requests following a Poisson process, running on a separate machine to avoid performance interference with the device.


\noindent{\textbf{Hardware.}} 
For client devices, we use the NVIDIA Jetson TX2 and Jetson Orin Nano, two commonly deployed platforms in IoT applications, with the former being an earlier model and the latter offering higher performance.
For edge servers, we use a Dell PowerEdge R630 equipped with an NVIDIA A2 GPU and 256GB RAM. 
We also use a server with an NVIDIA RTX 4070 GPU and 32GB RAM. 
Both edge servers are connected to a 1 Gbps network.
These platforms span a wide performance range, from 1.3 TFLOPS (TX2) to 2.6 TFLOPS (Orin Nano), 4.5 TFLOPS (A2), and up to 29.1 TFLOPS (RTX 4070) in peak FP16 throughput. 
All platforms run Ubuntu 18.04. We use the Linux Traffic Control (TC) subsystem to emulate different network bandwidths.
For each accelerator, the effective degree of parallelism $k$ depends on both the number of cores and the workload’s processing and memory demands, and is not directly observable. 
We estimate $k$ by empirically measuring how response time varies with request rate for each workload and identify a value of $k$ that best captures the observed scaling behavior.

\subsection{Parameter Estimation}
\label{sec:para_est}
Our models require several parameters to compute the end-to-end response times. 
We now describe how they are obtained in our evaluation. Additional practical approaches for estimating them are discussed in Sec \ref{sec:service_time}.

\noindent\textbf{Estimating average service times. }
We estimate service times through profiling and a neural network–based prediction approach. 
For DNN workloads, we monitor inference durations using the NVIDIA System Management Interface (nvidia-smi) tool \cite{nvidiasmi}, which reports per-process execution times on the accelerator. 
For RNN and LLM workloads, we profile request latencies on a representative input set separate from the test set and use the average latency as input to the models. 
This profiling data informs our edge-offloading and on-device processing experiments. 
For collaborative processing, a neural network is trained on profiling data to predict service time as discussed in Sec \ref{sec:service_time} to avoid the extensive profiling overhead of all possible split configurations by predicting partial service times.   


\noindent\textbf{Estimating average network delays. }
We estimate transmission delay using the available network bandwidth, measured continuously in real time with iperf \cite{iperfwebsite}. 
In addition to queuing and transmission delays captured by our models, network delays also include propagation and processing delays.
Since we assume that requests are offloaded to a nearby edge server, the propagation delay (caused by signals traveling at the speed of light) is minimal and thus excluded from our models.
Processing delay can be estimated by measuring the round-trip time of small probe packets, and is excluded from our models due to its negligible impact on overall latency.

\noindent\textbf{Estimating system utilization. }
The system utilization $\rho$ is calculated as the ratio of the arrival rate $\lambda$ to the service rate $\mu$ of the system. 
The arrival rate $\lambda$ can be estimated by applying a sliding window over incoming request timestamps. 
The service rate $\mu$ can be estimated from system logs based on the number of completed requests within a given time interval.

\subsection{Impact of Workload Characteristics}
We first study how workload characteristics influence offloading decisions using three DNN models: 1) MobileNetV2 (low compute, small payload); 2) InceptionV4 (high compute, moderate payload); and 3) YOLOv8n (high compute, large payload).
We compare on-device processing on TX2 and Orin Nano against offloading to an A2 GPU using a request rate of 2 requests per second (RPS) and a network speed of 5 Mbps.

\begin{figure}[t]
    \subfloat[\label{plot:tx2_mnet}]{
        \includegraphics[width=0.493\columnwidth]{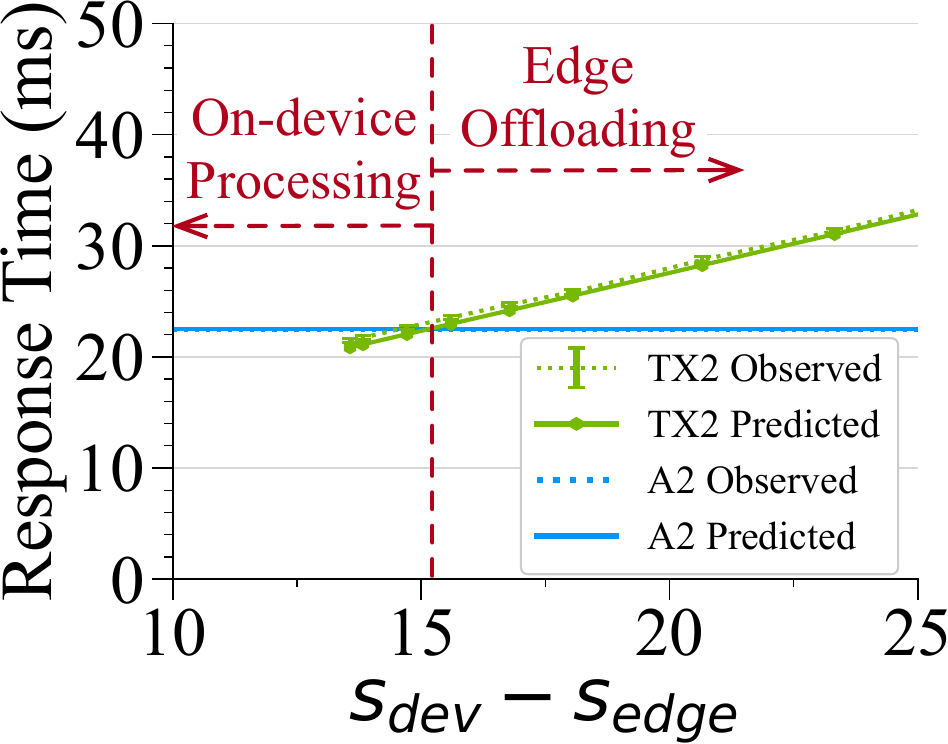}
    }
    \subfloat[\label{plot:orin_mnet}]{
        \includegraphics[width=0.493\columnwidth]{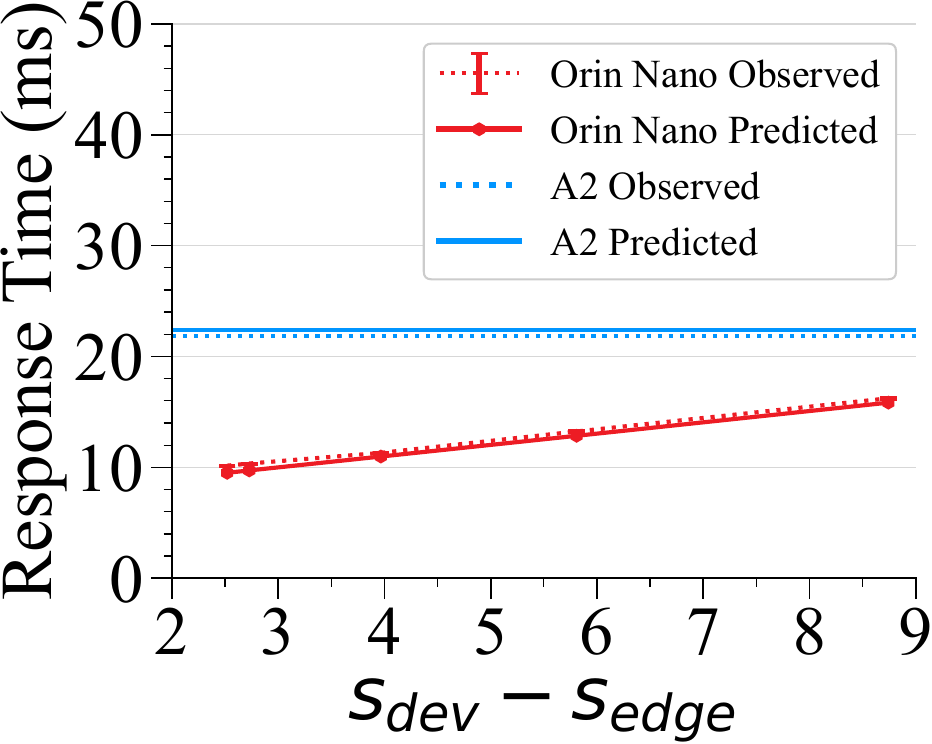}
    }
    \vspace{-.7em}
    \subfloat[\label{plot:tx2_inception}]{
        \includegraphics[width=0.495\columnwidth]{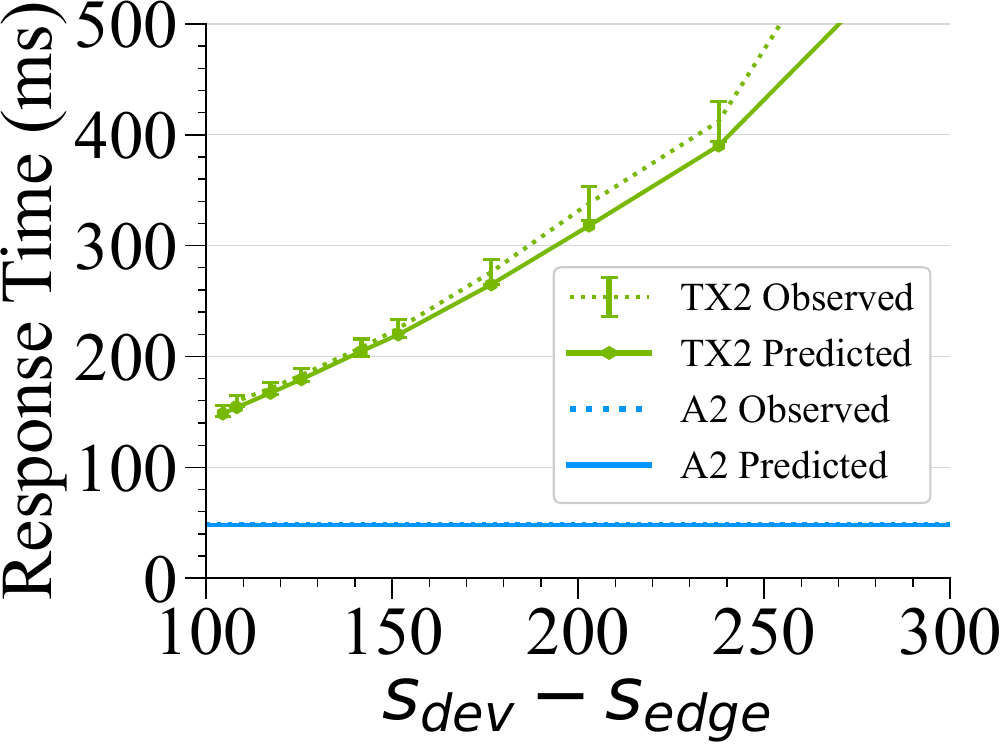}
    }
    \subfloat[\label{plot:orin_inception}]{
        \includegraphics[width=0.495\columnwidth]{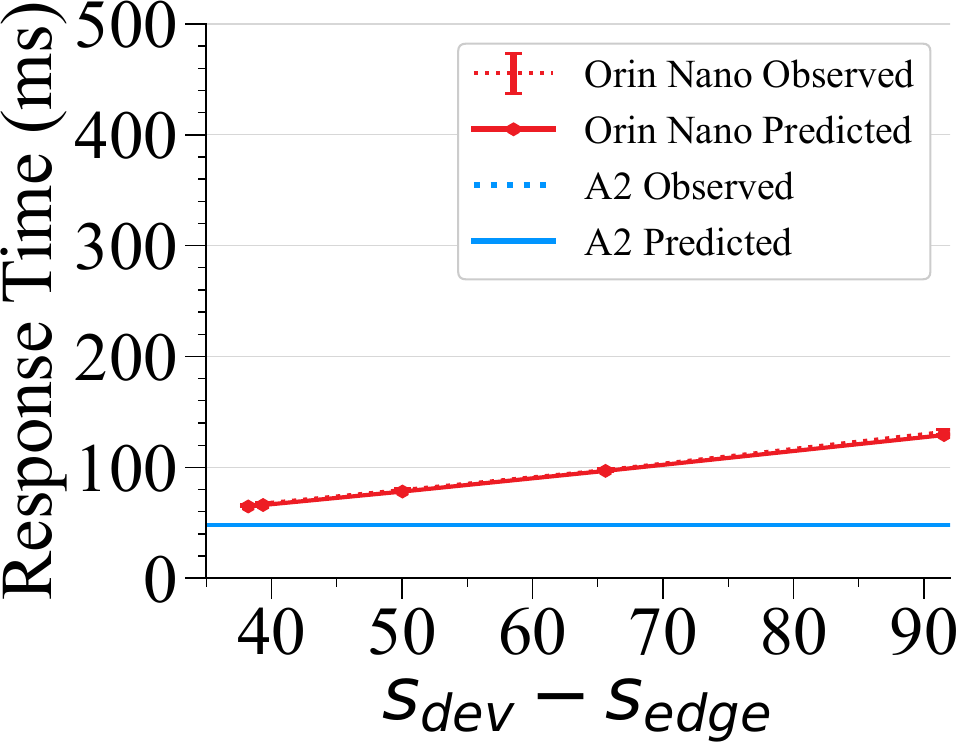}
    }
    \vspace{-.6em}
    \subfloat[\label{plot:tx2_yolo}]{
        \includegraphics[width=0.495\columnwidth]{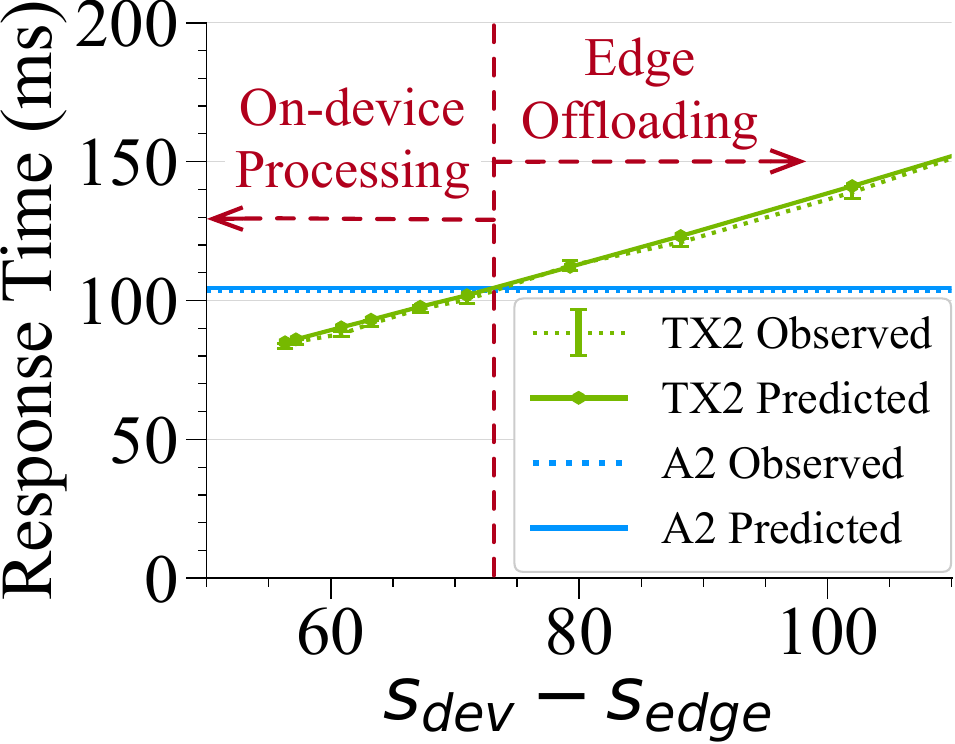}
    }
    \subfloat[\label{plot:orin_yolo}]{
        \includegraphics[width=0.495\columnwidth]{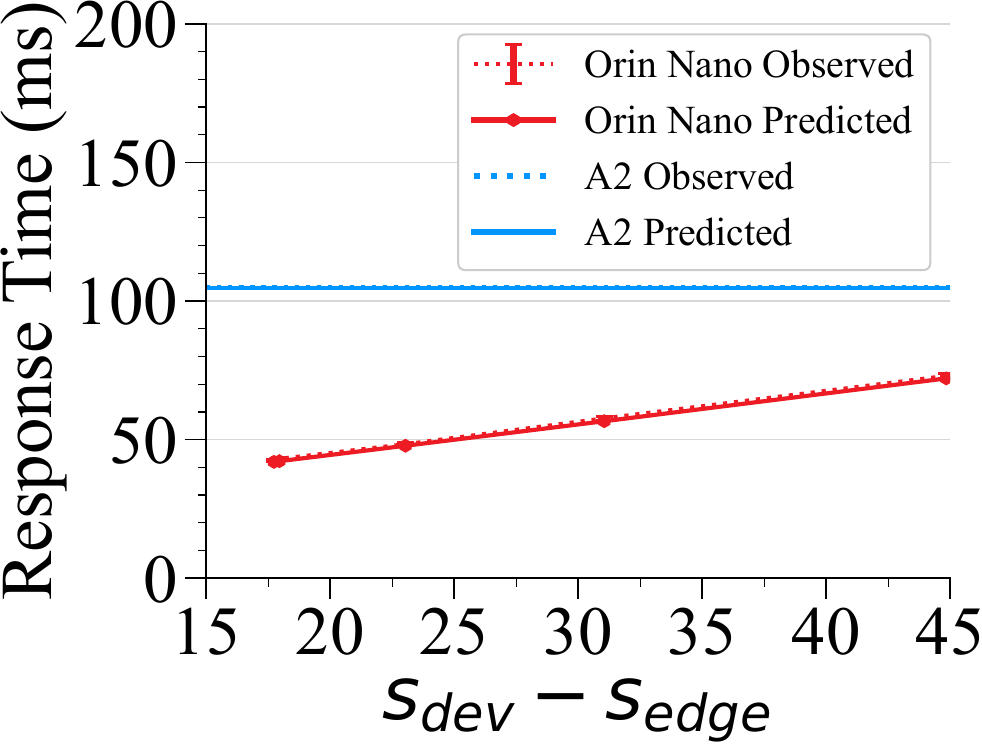}
    }
    \vspace{-.5em}
    \caption{Latency comparison for DNN workloads: (a-b) MobileNetV2, (c-d) InceptionV4, and (e-f) YOLOv8n. }
    \label{plot:varyD}
\end{figure}

Figure \ref{plot:varyD} compares the average latency of on-device processing and edge offloading across workloads and devices.
For each device, we test all available GPU frequency settings to emulate varying device conditions, with each data point representing one setting.
Figures \ref{plot:tx2_mnet} and \ref{plot:orin_mnet} show that for MobileNetV2, TX2 achieves lower average latency in its top three frequency levels, while Orin Nano consistently outperforms offloading across all frequency levels. 
The trend reverses for InceptionV4 as shown in Figures \ref{plot:tx2_inception} and \ref{plot:orin_inception}, where offloading provides lower average latency across all device configurations.
Lastly, Figures \ref{plot:tx2_yolo} and \ref{plot:orin_yolo} show that despite YOLOv8n's high compute demand, TX2 outperforms offloading in its top six frequency levels, while Orin Nano achieves lower average latency across all settings. 
We observe similar results when offloading to an RTX4070 GPU and omit the details here due to space constraints.

Collectively, these results align with the qualitative observations stated in Remark \ref{eq:remark1} and Remark \ref{eq:remark2}. 
While the performance ratio between device and edge is fixed, the absolute advantage of offloading scales with the workload’s computational demands.
For MobileNetV2 (low compute), the edge’s relative speed advantage results in only small absolute gains that can be overshadowed by network delays of offloading.
However, for InceptionV4 (high compute), the same performance ratio yields larger absolute improvements that outweigh the network delays.
Moreover, YOLOv8n (large payloads) favors on-device processing because it avoids any network delay. 
In all cases, the model predictions closely match the observed response times for both approaches, achieving a mean absolute percentage error of 2.2\%, with 91.5\% of predictions falling within $\pm$5\% of the observed latency and 100\% within $\pm$10\%, proving their accuracy.

\noindent\textit{\textbf{Key takeaway.} Offloading provides lower average latency when the edge’s absolute performance gain exceeds the associated network delay. Our models accurately predict the latency for both strategies with a mean absolute percentage error of 2.2\%.}

\subsection{Impact of Complex Deep Learning Models} \label{sec:RNN}
\begin{figure}[t]
    \centering
    \subfloat[\label{plot:LSTM}]{
        \includegraphics[width=.49\columnwidth]{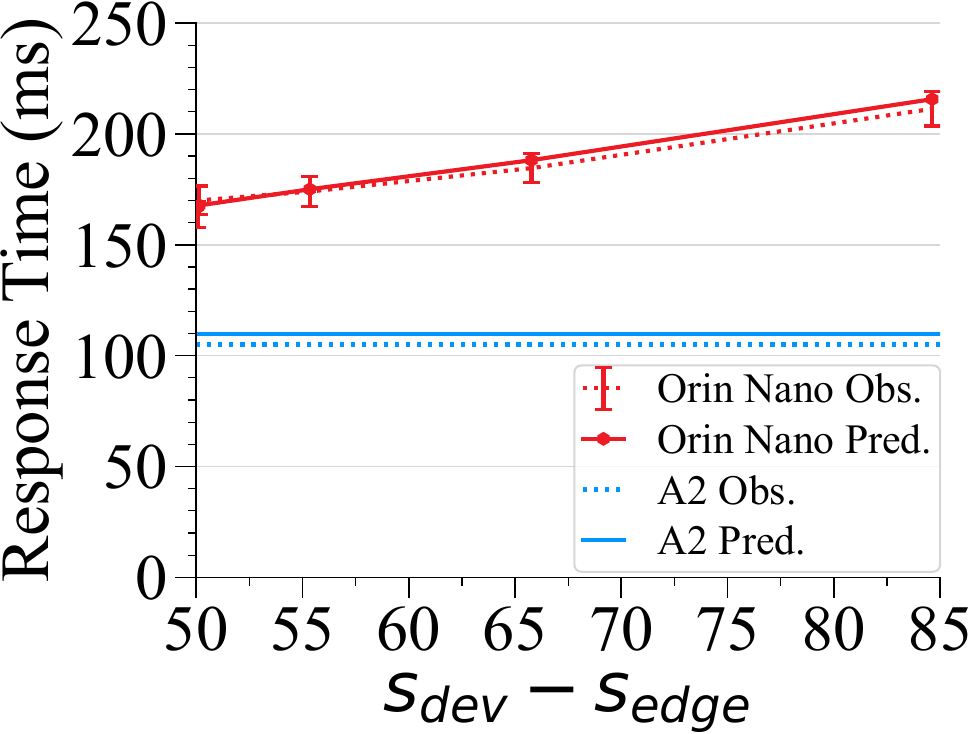}
    }\hfill
    \subfloat[\label{plot:LLM}]{
        \includegraphics[width=.465\columnwidth]{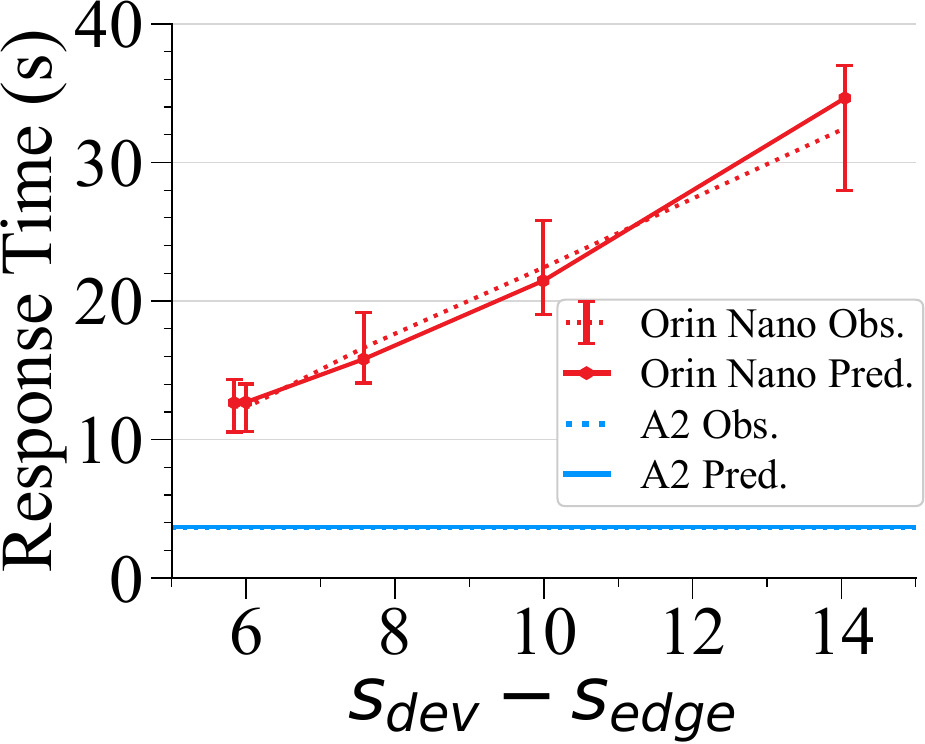}
    }
    \vspace{-1em}
    \caption{Latency comparison of execution strategies for (a) LSTM and (b) Llama-3.2-1B models.}
\label{plots:RNN}
\end{figure}
In addition to DNN workloads, our model accurately captures the latency behavior of other deep learning workloads.
Figure~\ref{plots:RNN} compares the average latency of on-device processing on the Orin Nano against offloading to the A2 GPU for RNN and LLM workloads under moderate loads.
Since service times vary across requests in these models, we use the $M/M/1$ queuing model to account for this variability.
Figure~\ref{plot:LSTM} shows that for the LSTM model, offloading achieves lower average latency as the input utterances are short and their payloads incur only minimal network delay. 
In the LLM case shown in Figure~\ref{plot:LLM}, which has a higher computational demand than LSTM, the latency reduction from offloading becomes even more pronounced, reaching up to 30 seconds.
In both cases, our model predictions closely match the observed latencies.

\noindent\textit{\textbf{Key takeaway.}} Our models generalize to other deep learning workloads by integrating appropriate queuing models.

\subsection{Impact of Network Conditions}
\begin{figure}[t]
    \centering
    \includegraphics[width=\columnwidth]{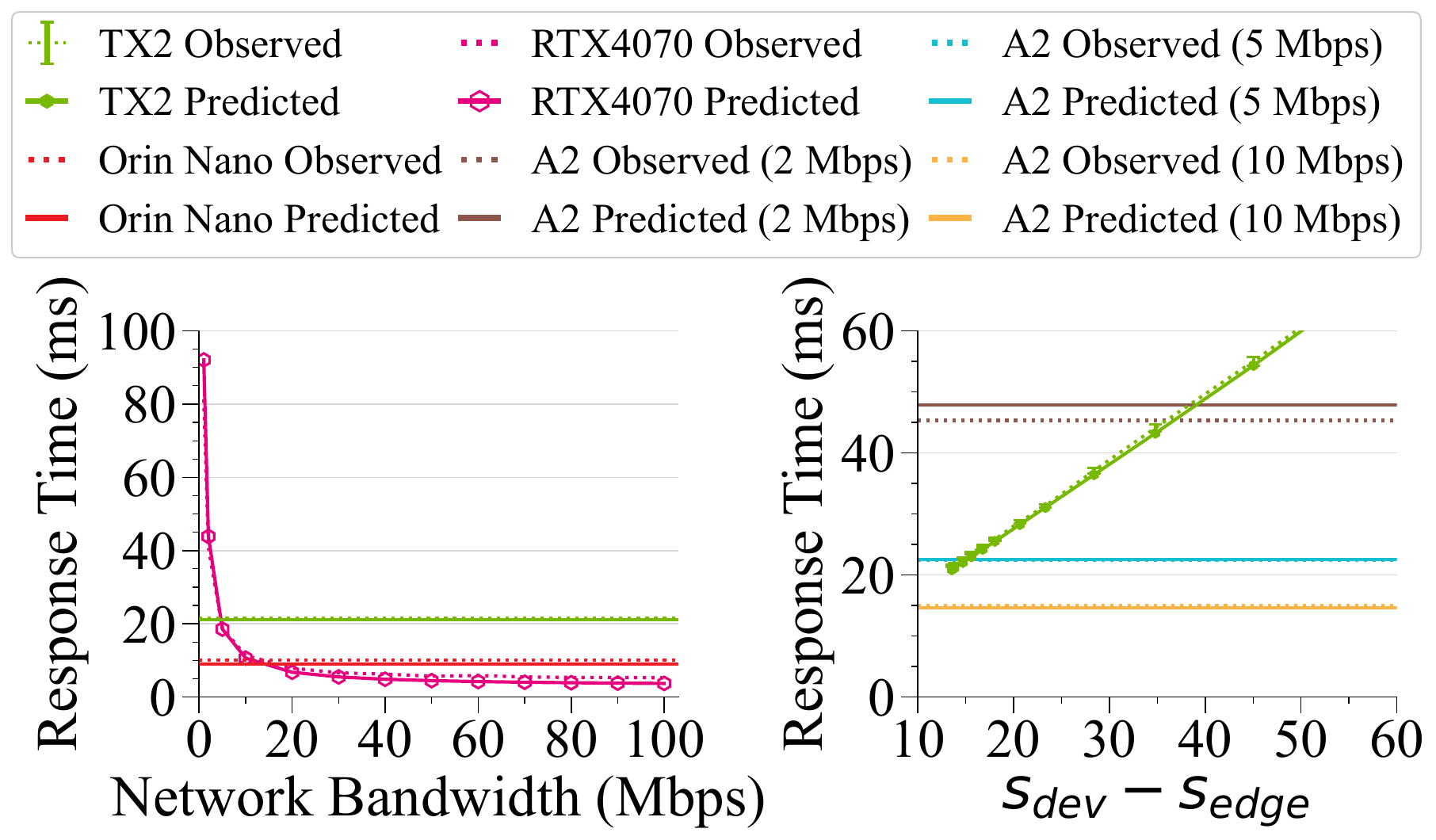} \\
        \vspace{-1em}
    \begin{minipage}{\columnwidth}
        \centering
        \subfloat[\label{plot:mnet_varyNet_4070}]{\hspace{15em}}
        \hfill
        \subfloat[\label{plot:mnet_varyNet_A2}]{\hspace{10em}} 
    \end{minipage}
    \vspace{-1.2em}
    \caption{Latency comparison under varying network bandwidth: (a) RTX4070 as the edge server, (b) A2 as the edge server.}
    \label{plot:mnet_varyNet}
\end{figure}
Figure~\ref{plot:mnet_varyNet_4070} compares on-device processing on TX2 and Orin Nano with offloading to the RTX4070 GPU under varying network bandwidths. 
It shows that the performance crossover point depends on both bandwidth and the performance gap between the device and the edge. 
For the lower-performance TX2, offloading becomes advantageous at 5 Mbps, while the faster Orin Nano requires a higher bandwidth (15 Mbps) for offloading to be beneficial.
Figure \ref{plot:mnet_varyNet_A2} shows that even for a single device, multiple crossover points may exist depending on bandwidth and the device's performance. 
At 5 Mbps, TX2 in its top three frequency levels outperforms offloading, but as its performance drops beyond those levels, offloading becomes advantageous.
In all cases, our models closely predict the crossover points, showing their accuracy in capturing the tradeoff between the computation speedup from edge offloading and network delay.

\noindent\textit{\textbf{Key takeaway.} At high network bandwidths, edge offloading becomes favorable as its computational advantage outweighs the smaller network delays.}

\subsection{Impact of Collaborative Processing}
\begin{figure*}[t]
\centering
\subfloat[\label{plot:split}]{
    \includegraphics[width=0.5248\columnwidth]{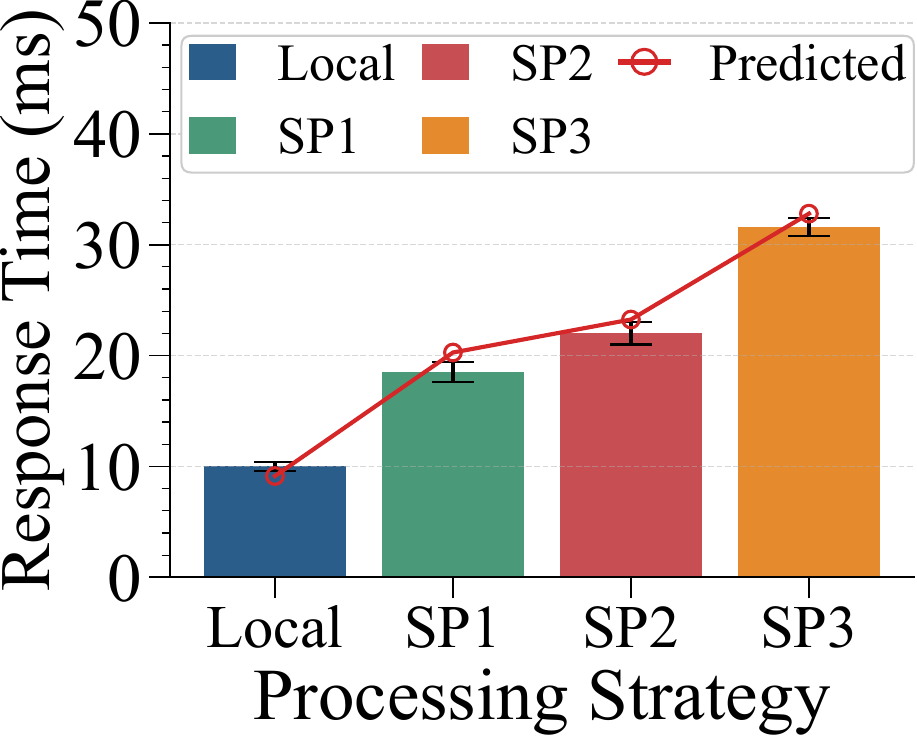}
}
\hspace{.8em}
\subfloat[\label{plot:mnet_varyRPS}]{
    \includegraphics[width=0.8922\columnwidth]{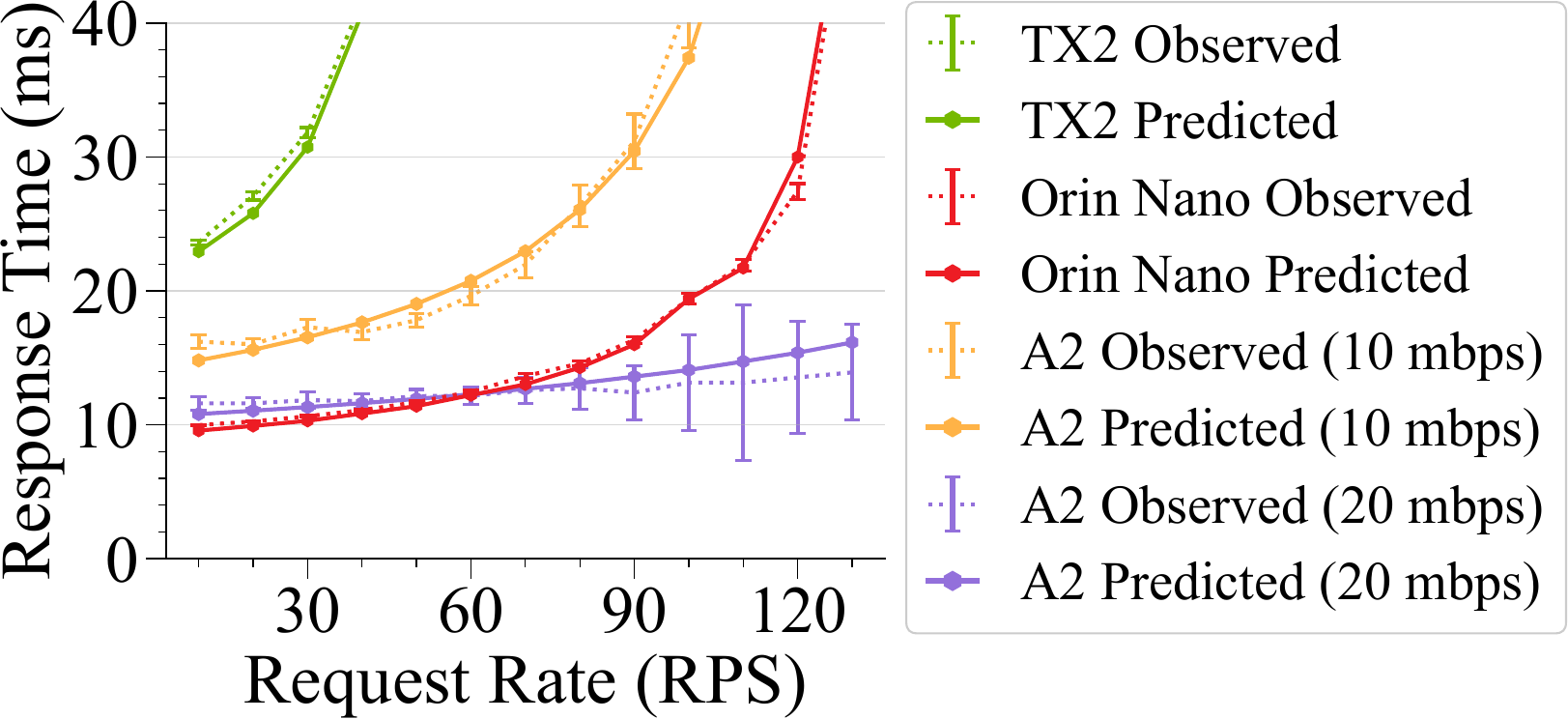}
}
\hspace{.8em}
\subfloat[\label{plot:inception_multi_tenant}]{
    \includegraphics[width=0.5402\columnwidth]{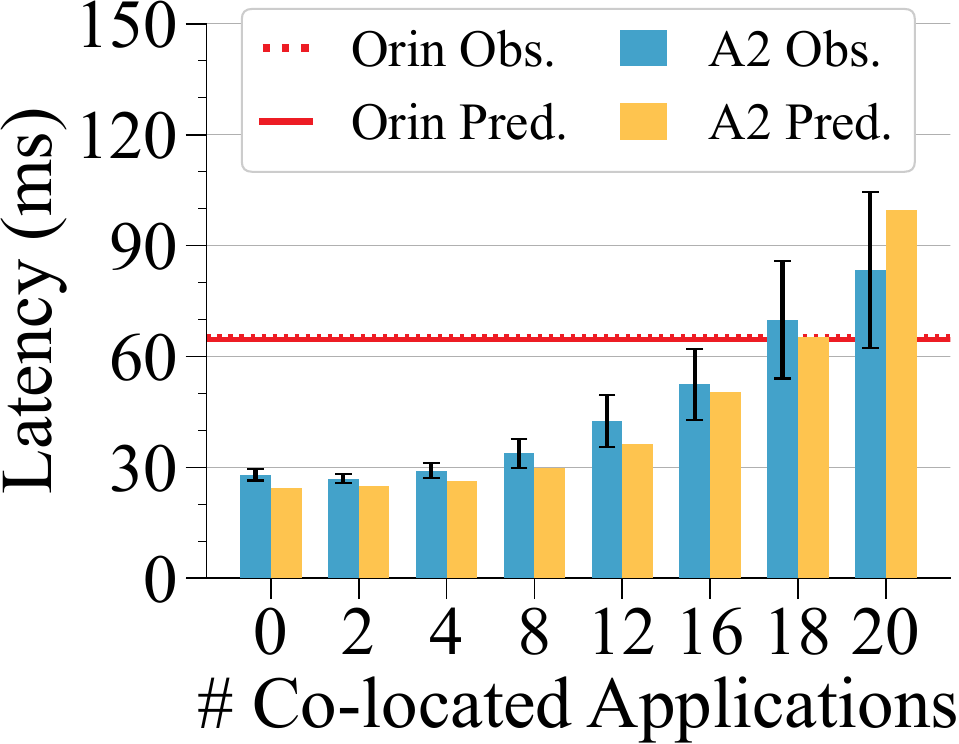}
}
\vspace{-.75em}
\caption{Latency comparisons of execution strategies under different: (a) collaborative processing configurations, (b) request rates, and (c) numbers of co-located applications.}
\label{fig:merge}
\end{figure*}
To validate our models' ability to capture collaborative (i.e., split) processing behavior, we split the execution of MobilNetV2 across Orin Nano and A2 over a 50 Mbps network.
We evaluate three split points (SPs) where the model is progressively offloaded to the edge and compare them to full on-device execution. 
For each split point, the early neural network layers are executed on the device, and the intermediate output and remaining computation are offloaded to the edge.
To compute the end-to-end latency, we use the combined analytic model discussed in Sec \ref{sec:discussions}.
Figure~\ref{plot:split} plots the observed latencies of each configuration alongside model predictions. 
As shown, the latency increases with the amount of compute offloaded.
This is because the intermediate results of later layers grow in size, adding to the network transfer overhead. 
As seen, for MobileNetV2 and the chosen network configuration, local processing is faster than split processing. 
Note that other models will exhibit other types of behaviors with collaborative processing, which we omit due to space constraints.  
Across all split points, the model-predicted latencies closely align with observed values. 

\noindent\textit{\textbf{Key takeaway.} 
Our models can be extended to accurately predict the performance of device-edge collaborative processing. 
}

\subsection{Impact of Request Rate}
Figure \ref{plot:mnet_varyRPS} shows how the request rate $\lambda$ affects offloading decisions for the MobileNetV2 workload under 10 Mbps and 20 Mbps networks, illustrating two distinct behaviors depending on the network bandwidth.
At 10 Mbps, Orin Nano outperforms offloading across all RPS. 
However, at 20 Mbps, a performance crossover occurs: 
on-device processing is faster at low load (RPS $\leq$ 60), while offloading becomes more efficient beyond that point. 
This result validates the competing scaling effects of $\lambda$ discussed in Lemma \ref{eq:d_vs_e_lemma}. 
Although both the network queuing delays (for offloading) and processing queuing delays grow with $\lambda$, the rate at which the delays increase differs depending on the hardware, workload, and network configurations. 
At 10 Mbps, the network queuing delay grows faster than the device queuing delay as $\lambda$ increases, favoring on-device processing. 
Conversely, at 20 Mbps, device queuing delays become dominant at higher $\lambda$, shifting the advantage toward offloading. 
Overall, the predicted latencies of our models closely match the observed latencies within confidence intervals, demonstrating their accuracy.

\noindent\textit{\textbf{Key takeaway.} As load increases, on-device processing becomes more advantageous under slower networks, while offloading provides lower average latency on faster networks. }

\subsection{Impact of Multi-Tenancy}

Figure~\ref{plot:inception_multi_tenant} compares the average latency of the two strategies as the number of co-located InceptionV4 applications on the edge increases, where each application receives 2 RPS from its corresponding client. 
The result shows that when only a single application is deployed on the edge, offloading yields lower average latency than on-device processing.
However, as the number of co-located applications increases, offloading latency increases as well. 
This increase is caused by resource contention among applications because the A2 GPU lacks isolation mechanisms. 
As a result, there is a crossover point at 18 applications, beyond which on-device processing becomes more efficient.
Note that the on-device processing latency remains stable regardless of the number of concurrent applications, as each device operates independently.
In all cases, our models closely predict the latencies for both strategies.

\noindent\textit{\textbf{Key takeaway.} 
Offloading average latency increases with the number of co-located applications due to resource contention on edge servers lacking GPU isolation.}


\section{Models in Action} 
In this section, we demonstrate how a resource manager can leverage our models to adaptively switch between offloading and on-device processing. 
We outline its algorithm and demonstrate its effectiveness through two case studies that showcase how the models enable adaptation to network variability and to multi-tenant edge servers.

\subsection{Model-Driven Adaptive Resource Management}
\begin{small}
\begin{algorithm}[!t]
\caption{Model-Driven Adaptive Offloading}
\label{alg:selection}
\KwIn{
\\ $D_{req}, D_{res}$: Request/result payload size \vspace{0.1em}
\\ $s_{dev}^{proc}, s_{edge}^{proc}$: Service times of device and edge
\\ $\lambda_{dev}$: Device task arrival rate \vspace{0.1em}
\\ $B$: Network bandwidth of device \vspace{0.1em}
\\ $\mathcal{E}$: Number of edge servers \vspace{0.1em}
\\ $\{\mu_{edge,E}^{proc}\}_{E=1}^{\mathcal{E}}$: Edge server aggregated service rates 
\\ $\{\lambda_{edge,E}\}_{E=1}^{\mathcal{E}}$: Edge server aggregated arrival rates \vspace{0.2em}
}
\KwOut{
\\ Execution strategy (on-device or edge server $E^*$)}
\BlankLine
\tcp{Predict on-device processing latency}
$\mu_{dev}^{proc} \gets 1/s_{dev}^{proc} $\\
$T_{dev} \gets M/D/1(\lambda^{proc}_{dev},\mu_{dev}^{proc}) + s_{dev}^{proc}$

\tcp{Predict edge offloading latency}
$T_{net}^{req} \gets M/M/1(\lambda_{dev},B/D_{req}) + D_{req}/B$\\
$T_{net}^{res} \gets M/M/1(\lambda_{edge,E},B/D_{res}) + D_{res}/B$\\
\For{$E \gets 1$ \KwTo $\mathcal{E}$}{
    $T_{edge,E} \gets  T_{net}^{req} + M/G/1(\lambda_{edge,E},\mu_{edge,E}^{proc}) + s_{edge}^{proc} + T_{net}^{res}$
}
\tcp{Select optimal strategy} 
\eIf{$T_{dev} < \min(\{T_{edge,E}\}_{E=1}^{\mathcal{E}})$}{
  \Return{on\_device\_processing()}
}{
    $E^* \gets \operatorname*{arg\,min}_E T_{edge,E}$ \\
  \Return{offload($E^*$)} \tcp{Offload to $E^*$}
}
\end{algorithm}
\end{small}

The resource manager runs locally on the device and periodically collects runtime metrics including network bandwidth, edge server load, and request arrival rate as discussed in Sec \ref{sec:para_est}. 
At each epoch, it inputs these metrics into our models to estimate the average latency for each strategy, and selects the strategy with the lowest predicted latency to execute requests.
Algorithm~\ref{alg:selection} outlines the decision-making process of the resource manager.
Lines 1-2 predict the average latency of on-device processing based on the current arrival rate and the device's estimated service time.
Lines 3-6 compute the predicted offloading latency for each edge server, considering server load, service rates, and network conditions.
Finally, Lines 7-11 execute the request using the strategy with the lower predicted latency.

\subsection{Case 1: Fluctuating Network Conditions}
\begin{figure}[!t]
    \centering
    \includegraphics[width=.99\columnwidth]{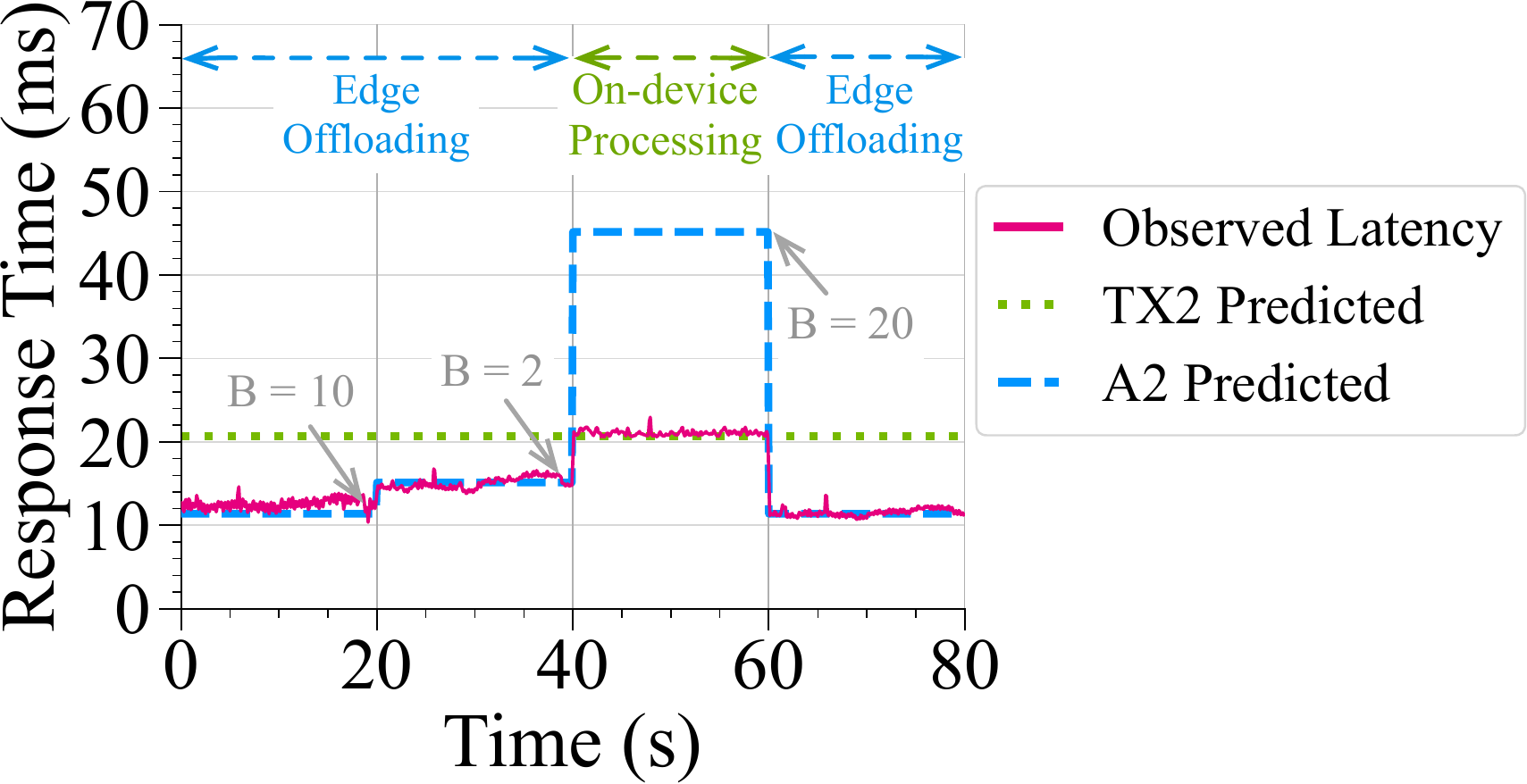}
    \vspace{-1em}
    \caption{Latencies under varying network bandwidth as the resource manager selects between the two strategies. }
    \label{plot:network_dynamics}
\end{figure}

We first demonstrate how the resource manager can leverage our models to adapt to dynamic network conditions, using TX2 as the device, A2 as the edge, with MobileNetV2 requests arriving at a fixed rate of 10 RPS.
Figure~\ref{plot:network_dynamics} shows the observed latency using the execution strategies chosen by the resource manager, alongside the model-predicted latency of each strategy.
Initially, the device is configured with a 20 Mbps network to represent a stable 5G connection. 
Since the predicted offloading latency is lower than on-device processing, the manager chooses to offload requests. 
To emulate real-world network variation, we reduce the bandwidth to 10 Mbps at t = 20 to reflect moderate signal degradation. 
Here, despite the reduced bandwidth from 20 to 10 Mbps, the predicted latency for offloading remains lower, and the execution strategy remains unchanged. 
At t = 40, we further reduce the bandwidth to 2 Mbps to emulate poor coverage. 
Since the predicted offloading latency (45 ms) now exceeds the predicted on-device processing latency (21 ms), the resource manager switches to on-device processing.
Finally, we restore the bandwidth to 20 Mbps at t = 60 to represent signal recovery, and the resource manager switches back to offloading.
Throughout this sequence, our models closely predict the observed latencies, validating their accuracy. 

\noindent\textit{\textbf{Key takeaway. }Our models enable dynamic adaptation to changing network conditions by switching between execution strategies.}

\subsection{Case 2: Multi-tenant Edge Servers}
\begin{figure}[!t]
    \centering
    \includegraphics[width=.99\columnwidth]{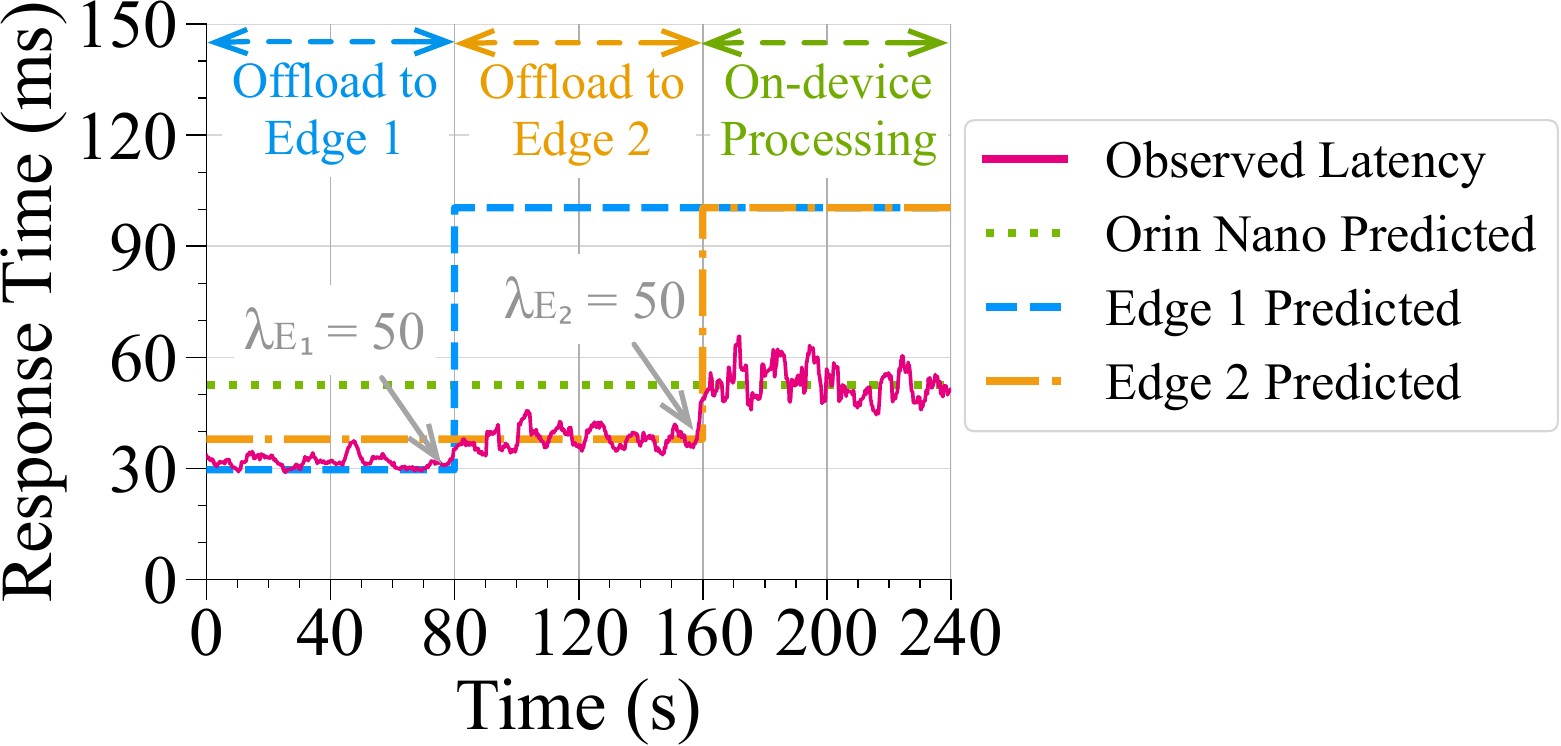}
    \vspace{-1em}
    \caption{Latencies under varying edge server workloads as the resource manager adapts. }
    \label{plot:mutli_tenant_rps}
\end{figure}
We now showcase how the resource manager can leverage our models to adapt to workload changes across multi-tenant edge servers. 
We use two edge servers, $E_1$ and $E_2$, each serving YOLOv8n inference requests using an A2 GPU.
Initially, the server request rates are set to $\lambda_{E_1} = 10$ and $\lambda_{E_2} = 30$, while the TX2 running the resource manager has a request rate of $\lambda_{dev} = 10$. 
Figure~\ref{plot:mutli_tenant_rps} shows the observed latencies alongside the predicted average latencies for offloading to $E_1$, $E_2$, and on-device processing.
At t = 0, since the predicted latency for offloading to $E_1$ is the lowest, the resource manager offloads requests to $E_1$. 
At t = 80, we add an additional device offloading request to $E_1$ to increase $\lambda_{E_1}$ to 50 to emulate workload fluctuations. 
In response, the resource manager adapts by offloading to the less loaded server $E_2$.
Similarly, at t = 160, we increase $\lambda_{E_2}$ to 50. 
Now, as the predicted latency of on-device processing becomes the lowest when both servers have a request rate of 50 RPS, the resource manager shifts to on-device processing.
This sequence highlights how our model enables the resource manager to dynamically adapt to fluctuations in edge server workloads.

\noindent\textit{\textbf{Key takeaway. } Our models enable the resource manager to dynamically adapt to workload changes in multi-tenant edge servers to maintain low latency.}
\section{Related Work}
\noindent\textbf{Edge offloading. }
Edge offloading is widely adopted in many applications to enable resource-constrained devices to perform compute-intensive tasks by leveraging edge resources \cite{Davis2004,Jang2018,Lai2017,Li2024,Ogden2021,Ren2019,Mei2012,Liu2018,zhou2017,Wang2018,Zhang2024MILCOM,Tkachenko2023,Ogden2023,Zilic2025}. 
For example, Furion \cite{Lai2017} selectively offloads compute-intensive components of mobile AR pipelines to the edge. 
Li et al. \cite{Li2024} introduce an evolutionary algorithm to dynamically determine the optimal offloading location of DNN layers under energy constraints.
In contrast, our work uses a model-driven approach to provide end-to-end latency predictions and quantitative performance crossover predictions for accelerator-driven workloads and generalizes to device-edge collaborative processing. 

\noindent\textbf{Model-aware resource allocation. }
Queuing models have been widely used to predict the performance of traditional CPU-bound workloads \cite{Jiang2021, WangB2024, Tantawi1985, Ambati2020, Gandhi2019, Shen2011, Urgaonkar2005}. 
More recently, there has been growing interest in modeling the performance of hardware accelerators. 
For example, Merck et al. \cite{Merck2019} model GPU batching latency in DNN inference workloads. Liang et al. \cite{Liang2023} develop performance models for inference on accelerators to guide DNN placement via an online knapsack algorithm.
In contrast, our models apply not only to inference workloads but also to other accelerator-driven workloads as shown in Sec \ref{sec:RNN}. 

\noindent\textbf{Modeling distributed workloads. }
Several studies have developed analytic models for distributed device–edge–cloud environments. 
FogQN \cite{Tadakamalla2018} uses multi-class queuing networks to model fog and cloud computing to determine the optimal fraction of data processing executed at the cloud versus at fog servers. 
Loghin et al. \cite{Loghin2019} model MapReduce applications in hybrid edge–cloud environments.
However, they abstract away accelerator-specific queuing behavior and do not account for multi-tenancy at the edge.
Our work complements existing efforts by addressing these missing dimensions through a unified queuing-based modeling framework that informs resource management decisions.




\section{Conclusions}
\label{sec:conclusion}
This paper presented analytic models for on-device processing and edge offloading with accelerators, providing performance predictions for adaptive execution decisions. 
We validated their accuracy across a range of scenarios and developed a resource manager that applies these predictions to adapt to real-world dynamics.
As future work, we plan to extend the models to support multi-stage pipelines represented as DAG structures, capturing inter-stage dependencies.

\bibliographystyle{ACM-Reference-Format}
\bibliography{main}

\end{document}